\DeclarePairedDelimiter{\ceil}{\lceil}{\rceil}
\DeclarePairedDelimiter{\floor}{\lfloor}{\rfloor}
\newtheorem{proposition}{Proposition}
\newtheorem{lemma}{Lemma}
\def \treq {\stackrel{\tiny \Delta}{=}}
\newcommand{\E}{\ensuremath{\mathbb E}}
\newcommand{\src}{$\mathcal{S}$}
\begin{document}

\title{On Learning-Assisted Content-Based Secure Image
	Transmission for Delay-Aware Systems with
	Randomly-Distributed Eavesdroppers \\ 
--- Extended Version ---}
\author{
}
\markboth{}
{}
%{Shell \MakeLowercase{\textit{et al.}}: Bare Demo of IEEEtran.cls for Journals}	
\author{\IEEEauthorblockN
{Mehdi Letafati, \IEEEmembership{Student Member, IEEE,}
			Hamid Behroozi, \IEEEmembership{Member, IEEE,}
		 Babak Hossein Khalaj, 
			\IEEEmembership{Senior Member, IEEE,}  and
		Eduard A. Jorswieck,
		   	\IEEEmembership{Fellow, IEEE,}
}
\textsuperscript{}\thanks{ 
	 M. Letafati, H. Behroozi, and B. H. Khalaj are with the Department of Electrical Engineering, Sharif University of Technology, Tehran 1365-11155, Iran (e-mails: mletafati@ee.sharif.edu; behroozi@sharif.edu;  khalaj@sharif.edu). 
	 E. A. Jorswieck is with the Department of Information Theory and
	 Communication Systems, TU Braunschweig, Germany (e-mail: e.jorswieck@tu-bs.de). 
}}
\maketitle

\begin{abstract}
In this paper, a learning-aided content-based wireless image transmission scheme is proposed, where a multi-antenna-aided source wishes to securely deliver an image to a legitimate destination 
%over a multiple-input single-output (MISO) channel 
in the presence of randomly distributed eavesdroppers (Eves).   
We take into account the fact that not all regions of an image have the same {importance} from the security perspective. 
%Region of interest (RoI) tends to contain important diagnostic information  and need a higher reliability requirement.
Hence, we propose a transmission scheme, where the 
%multiple-antenna-aided 
source employs a hybrid method to realize both the  error-free data delivery of public regions---containing  less-important pixels; and an artificial noise (AN)-aided transmission scheme  {to provide security} for the regions containing large amount of information.     
Moreover, in order to reinforce system's security, fountain-based packet delivery is  adopted: First,  
%during each transmission slot,
the source node  encodes image packets into fountain-like packets prior to sending them over the air. 
{The secrecy of our proposed scheme will be achieved if the legitimate destination correctly receives the entire image source packets, while conforming to the latency limits of the system, before Eves can obtain the important regions.  
Accordingly,} the secrecy performance of our scheme  is characterized by deriving the closed-form expression for {the quality-of-security (QoSec) violation probability.}  
%while taking the tolerable delay requirement into account.  
%Transmission parameters of our proposed scheme are then optimized via leveraging the deep neural network (DNN).   
Moreover, our proposed wireless image delivery scheme  leverages the deep neural network (DNN) and learns to maintain optimized transmission parameters, while achieving a low QoSec violation probability. 
Simulation results are provided with some useful engineering insights which  illustrate that our proposed  learning-assisted scheme outperforms the state-of-the-arts by achieving  considerable gains in terms of security and the delay requirement. 
\end{abstract}

\begin{IEEEkeywords}
Content-aware security, wireless image transmission, learning-aided secure communication, delay-aware packet delivery.
\end{IEEEkeywords}

\IEEEpeerreviewmaketitle
\section{Introduction} 
\IEEEPARstart{T}{he} modern era of wireless communication  is facing  a remarkable  evolution that has led to the development of 
%the fifth generation (5G) and 
 beyond-the-fifth-generation (B5G) mobile technologies, 
which assert supporting an overabundance of  contemporary services
including the Internet-of-Things (IoT), autonomous driving, and Augmented/Virtual Reality (AR/VR) \cite{VR}. 
Meanwhile, comprehensive studies are being conducted on the sixth generation (6G) of wireless cellular systems to look for its potential development  challenges \cite{6G}.    

Despite the genuinely-designed networking  protocols of 5G and B5G communication systems,  security challenges are still 
witnessed  as open issues that have not been thoroughly addressed. 
Simultaneously, on the roadmap towards 6G, new security challenges can be identified due to the considerable revisions of key communication parameters and operational entities, e.g., the end-to-end  tolerable latency;  the deployment of long-lasting IoT devices with security requirement;  the wide-range usage of different RF technologies; and, the quantum computers {\cite{6G-PLS,6G}}.    
%In what follows, we briefly provide some of the main security issues with which the 5G and B5G systems are facing. 

\subsection{Motivation}
Although a great amount of innovative studies have been carried out on the security of the 5G core network \cite{net-slic},  
%the enhancement of the security of the sixth generation (6G) wireless access becomes of critical importance. 
 the wireless edge of 5G, B5G, and also 6G systems is vastly encountered with ever-rising  attacks, e.g., jamming, eavesdropping, and traffic analysis, which can be deployed at a low price,  using low-cost software defined radios. 
 More precisely, the intrinsic broadcast nature of wireless medium  has made the wireless networks  susceptible to security and  privacy risks  \cite{IoT1}.  
 %In addition, from the Quality of Service (QoS) perspective, the delay which is imposed to the network due to  the transmission of large-sized medical images can become a considerable challenge for the newly-emerged IoT healthcare systems. 
    
 In this regard, physical layer security (PLS) solutions, which leverage intrinsic properties of the radio channel, can be incorporated  thanks to their inherent capability of being adapted  to the communication medium \cite{IoT1,IoT2,6G-PLS}. 
 The main idea of PLS is to make the legitimate link  better than
 %take precedence over 
 the unauthorized links of potential adversaries.    
 Remarkably, a great number of new attributes of future networks, such as low-latency control systems and sensor fusions, require only localized communications, where the core network is not engaged. 
  Hence, utilizing  PLS approaches can provide a desirable level of agile security for such scenarios.    
In addition to rendering  flexible secure frameworks, PLS can achieve  information-theoretic security guarantees via utilizing lightweight mechanisms \cite{IoTJ}.

% Propel:‌ Force, push, propmt 
Remarkably,  owing to the ability of machine learning (ML)-based  techniques to acquire and learn  complex features of data, utilizing ML can help PLS solutions propel the secrecy performance of wireless systems to a state that reliable  context-aware security can be achieved with a low computational complexity \cite{6G-PLS}.  
For instance, the secrecy of wireless communications can be enhanced via utilizing deep neural networks (DNNs) to intelligently maintain optimal  transmission parameters {\cite{learning-based,  Eduard-AE}}.    
%Machine learning (ML) has been shown to be effective in a variety of  wireless network applications with security constraints. 

Recently, wireless image transmission,  as a practical scenario of reliable data delivery, is receiving much attention from both academia and industry,  due to its applications in newly-emerged services such as AR/VR or surveillance systems \cite{Im-retriev-deniz}.   
In this regard, if the transmitted  images are not rigorously secured, the performance of such wireless services are no longer reliable.    Moreover, from the quality of service (QoS) point of view, the delay  imposed to the imaging systems due to the transmission of large-sized  images should be handled properly.  
Conventional wireless image transmission schemes were realized through computationally expensive  cryptographic methods or digital watermarking techniques \cite{camouf}, which are easily  breakable with the increasing growth in the computational capabilities of modern computers, e.g.,  quantum computing technology.  
%{Furthermore, the conventional schemes require infrastructures that cannot scale-up well with the number of devices. }
Therefore, the PLS-based approaches  together with ML techniques can be leveraged to provide flexible delay-  and content-aware security solutions for  wireless image delivery.     

\subsection{Related Works}
A large number of researches in the PLS domain has been carried out to  study the so-called secrecy capacity (SC), which is defined as the maximum data rate that can be reliably and confidentially transmitted, e.g., in some case,  the difference between the capacities of the main and wiretap
	channels 
	to provide perfect secrecy {\cite{3hop, twc, Leti-GC}}.  
However, the SC-inspired transmission design dictates a much lower acceptable data rate, which may result in large delivery delays.  
%	 at which
%	the eavesdropper cannot decode any information correctly (e.g.
%	perfect secrecy). In this situation, data transmission rate is
%	limited by SC. 
% % %  %  % %  %   % 
  %Remark:‌ As we move gradually away from the standard client-server networking paradigm and enter a new era of truly end-to-end (E2E) quality of service (QoS), service level agreements (SLAs) in the near future will be expected to include guarantees about the quality of security (QoSec) as well. Defining the right ingredients of QoSec, including how to identify the security level required and propose adaptive, dynamic and risk aware security solutions, is currently being investigated. 
Despite the growing  demand for higher capacities, the emergence of the IoT-enabled services is imposing a  paradigm shift from the capacity-centric  mobile  services  towards reliable delay-aware and content-based services. 
Accordingly, in a practical scenario of wireless image  transmissions,    the perfect secrecy is not realizable 
%because of the conflicting goals for such systems 
{\cite{basic-medical, sun2020,sun-Ind-relay, sun-prediction}}.   
In other words, from the view-point of an eavesdropping adversary, the  original image cannot be retrieved even if a small number of packets are correctly received. 

In the area of PLS-based data delivery, the authors in \cite{basic-medical} examined  an adaptive allocation of physical resources in a fountain coding (FC)-aided point-to-point (P2P) system of delivering medical images.    
A similar single-input single-output (SISO) FC-assisted wireless network was examined in \cite{sun2020} and \cite{sun-prediction} with  more emphasis on  the image content and the channel imperfections, respectively.   
The FC as a low-complexity technique has been shown to be able to improve   packet delivery security at the physical layer (PHY) {\cite{FC1}}.  
In fountain-coded transmissions, the original data is first divided into a series of, e.g.,  $N$, source packets, which are then linearly combined to obtain 
%a potentially infinite number of FC 
encoded packets.  
The source message can be entirely recovered at receiver, iff at least $N$ independent FC packets are successfully received \cite{FC1}. 
%	This property implies that in an eavesdropping environment,	the system secrecy requirement can be satisfied if the legitimate receiver can accumulate $N$ coded packets before the	eavesdropper does. 
%Compared to traditional PLS schemes, the FC-based approach can significantly increase the transmission rate between the legitimate transceivers, which is bounded by Shannon capacity rather than secrecy capacity.	
%Ref. \cite{FC1} 	is the first work to use fountain codes for wireless 	security, where a truncated-inversion based power control	policy was developed to ensure a constant (SNR) at the legitimate receiver and a randomly-varying SNR at Eve.  
%Since the publication of the seminal paper \cite{FC1}, a handful of studies have been devoted to this research area.  The work in \cite{FC2}--\cite{FC3} utilized resource allocation and signal processing techniques   
Remarkably, despite addressing the delay limits of a practical system in \cite{basic-medical,sun2020,sun-prediction}, the researchers did not provide a \emph{closed-form expression} for their performance metric.  
 The FC-aided packet delivery was  applied to cooperative networks in  \cite{sun-Ind-relay}, by deploying more complex techniques in addition to FC, i.e.,  cooperative jamming and relay selection. 
 However, the researchers' proposed scheme in \cite{sun-Ind-relay} did not take  the content of transmitted data into account, yielding an unconscious packet delivery scheme. 
 Moreover, a fixed rate transmission was considered in their work regardless of wisely considering the link quality and the content of  image. 
 %without any effort to wisely adapt the  transmission  parameters.   
%In this manner, with high probability Bob can accumulate the required number of coded packets faster than Eve does; thus, decreasing the intercept probability \cite{FC2}--\cite{FC3}.   
Notably, all of the studies  mentioned above,  considered  only one single-antenna eavesdropper (Eve)  in their model, where some partial information regarding the wiretap channel is available.   
%To the best of the authors knowledge, except the work of \cite{basic-medical}, there is not any research  in the area of secure  wireless transmission for with having focus on . 	

\subsection{Contributions} 
In this paper, different from {\cite{basic-medical,sun-Ind-relay,sun2020,FC1, sun-prediction}},
we  investigate a 
multi-input single-output (MISO) FC-aided wireless system, where  a multi-antenna source node aims  to send an image to a legitimate destination, while conforming to the acceptable  delay limits of  delivering the file.     
In addition to the legitimate entities, different from    {\cite{3hop,twc,IoTJ, Leti-GC,basic-medical,sun2020,FC1,sun-Ind-relay,sun-prediction}}, in our proposed system model we assume    \textit{multiple totally passive} randomly-located  Eves which are trying to wiretap the wireless image  delivery.  
The eavesdroppers in our system model are assumed to be distributed according to homogeneous Poisson point process (PPP) \cite{Stoch-Geom}.    
We also consider both cases of non-colluding eavesdropping (NCE) and colluding eavesdropping (CE) scenarios throughout our analysis and results.  
 Moreover, it is considered in our system model that we have an uncertain (imperfect) estimation regarding the legitimate link, {while there is no channel state information (CSI) available at the legitimate source about the eavesdropping links.}

We address the content of the image through the process of secure wireless transmission 
%having the intuition that
in the sense that not all regions of an  image have equal importancy levels from the 
%content-awareness perspective.  
{content-centric view}.  
Accordingly, we consider the image to be partitioned  into two segments, i.e., the region of interest (RoI) and the background (BG), where the RoI source packets of an image are likely to contain important diagnostic information regarding the content of an image. 
Hence, the RoI packets will require stringent security and reliability specifications,  while the BG packets do not provide useful data if being revealed to the passive Eves.   
%Afterwards, for both segments of RoI and BG, we further prioritize their corresponding source packets  according to a utility function which will be elaborated on later.
Remarkably, a  content-based secure   wireless image delivery  is not addressed in many recently-published works in the area of wireless image transmission, e.g.,   
{\cite{Im-retriev-deniz,DJSCC-Deniz,AE-Deniz}}.  
In other words, they all consider a unified end-to-end approach for sending an image over the air, regardless of paying attention to the content/region of the image being transmitted at each transmission slot (TS).    
In contrast, we propose a hybrid transmission policy according to the type of the source packets scheduled to be transmitted. 
When sending RoI-related 
%confidential
 packets, our proposed system model deploys  maximum ratio transmission (MRT) beamforming together with artificial noise (AN) injection, as a practical and low cost PLS technique \cite{ANI-practical},  to achieve a physical enhancement for the legitimate link, while degrading the eavesdropping ability of potential Eves.  
For the transmission of BG packets, we simply deploy MRT in order to save power consumption   and improve the packet accumulation ability of legitimate receiver, since the BG regions require less stringent security.   
We also stress that in order to reinforce the system's capability for reliable  transmission of image source file, our proposed scheme also exploits FC  and encodes the original source packets into fountain-like packets prior to  MRT deployment.        

In this paper, different from  the  previous works of   {\cite{Leti-GC,3hop,twc,IoTJ}}, and \cite{AE-Deniz},   we take the delay limits of a practical image transmission system into account  
via utilizing a hybrid multi-packet transmission design.  
%and carefully determining which packets to be sent during each TS.   
We then derive an exact closed-form expression, which was not obtained in {\cite{basic-medical,sun2020,sun-prediction}}, for the {quality-of-security (QoSec) violating probability (QVP)} as our performance metric to addresses the  effect of delay limits on the image delivery.     
 {The QVP‌ metric characterizes the performance of our system to correctly recover the entire image at destination before Eves can obtain the RoI-related important  regions, while conforming to the delivery delay limits.}
%To further combat Eves, we also control the intercept probability (IP) of Eves when we opt to send RoI  
%Our proposed model is tested on a practical image transmission of a medical image to  demonstrate the effectiveness of our scheme in providing  image security at physical layer. 

In our proposed model, we further formulate an optimization problem opted for reducing the QVP by maintaining optimal transmission parameters.  
Due to the non-linear characteristics of our objective and constraints, finding optimal solutions through conventional algorithms may result in low efficiency and long delay effect  in a practical deployment. 
Therefore, we 
%opine utilizing
utilize a DNN composed of dense layers as our learning-based approach  to provide an efficient and flexible  method of finding optimal transmission parameters. 
In other words, we exploit the potential of DNN in intelligently  adjusting our image transmission scheme to minimize the QVP performance metric, which was not considered in {\cite{3hop,twc,Leti-GC,basic-medical,sun2020,sun-prediction,sun-Ind-relay}}.   
By leveraging the DNN,  we perform an offline phase of data set generation for numerous realizations of the wireless medium. 
 Our developed DNN then  learns an optimum mapping  between the generated  realizations and the associated  optimal policies.  
  Afterwards, our trained DNN 
  %tries to  
  provides {approximate  performance}  for  new realizations of the network via  a significantly-low computational complexity that actually  corresponds to performing  a forward propagation through  the trained DNN online.  
%Thus, our trained DNN can be applied as an effective and low-complexity online power allocation method. 
Finally, our simulation results clearly illustrate the performance of our proposed learning-based delay-aware scheme compared with the state-of-the-art benchmarks for   image transmission.  

%Motivated by the above discussions,  our contributions are summarized as follows.  
%\begin{enumerate}
%\item CE and NCE cases
%\item QVP derivation 
%\item ML (DNN)
%\end{enumerate}

The remainder of this paper is orchestrated as follows. In Section II, our proposed system model together with the main assumptions and intuitions are  provided. 
Section III  proposes our transmission policy for secure wireless image delivery.   
In Section IV, we investigate the performance of our proposed scheme  through deriving closed-form expression of the QVP.  
%as our performance metric.  
%In this regard, the effect of delay limits of the system are considered by imposing  
The details on how to obtain optimal transmission parameters for the ML aspect of our proposed scheme are studied in Section V.   ‌
In Section VI, extensive simulation results are provided to verify the analytical  results. We also compare our  learning-based scheme with different  benchmarks and baselines.    
Section VII concludes the paper and presents useful insights for further  research directions.

{\textit{Notations:} We denote the transpose, the conjugate transpose, and $\ell^2$ norm of a vector by $(\cdot)^\mathsf{T}$, $(\cdot)^\dagger$, and $||\cdot||$, respectively. 
Moreover, $|\cdot|$ represents
the absolute value of a scalar variable and 
%$\mathrm{card}(\cdot)$ denotes
 the cardinality of a set.    
Vectors are represented by bold lowercase letters, while matrices are written  as bold uppercase symbols.    
%The zero and the identity matrices are shown by $\mathbf{0}$ and $\mathbf{I}$, respectively.  
$\mathcal{CN}(\mu,\sigma^2)$, $\text{Exp}(\lambda)$, and $\mathcal{G}(N,\lambda)$  represent a complex random variable (RV) with mean $\mu$ and variance $\sigma^2$, exponential
distribution with parameter $\lambda$, and gamma distribution with
parameters $N$ and $\lambda$, respectively.  
%A random variable (RV) corresponding to a signal symbol is denoted by the uppercase character of that symbol.  
The expected value, the probability density function (pdf) (also  the probability mass function (PMF) for a discrete RV), the cumulative distribution function (CDF), and the complementary cumulative distribution function
(CCDF) of RV ${X}$ are denoted by $\E[{X}]$, $f_X(x)$,  $F_X(x)$, and $\overline{F}_X(x)$, respectively. 
Moreover, the probability of an event $A$ is denoted by $\mathsf{Pr}(A)$.
%The mutual information of two RVs ${X}$ and ${Y}$ is shown by $I({X};{Y})$.  
%	We also denote $[x]^+=\max(x,0)$.  
%$(a)_n$ stands for the Pochhammer symbol.  
$W_{\nu,\mu}(x)$, $\Gamma(x)$, $B(m,n)$,  and  $\gamma(\alpha,\beta)$ 
%$\mathrm{Ei}(\cdot)$, $\psi(\cdot)$, 
%and  $\Phi=0.577$
%and ${}_2 F_1(\alpha,\gamma;\beta;z)$ 
are the Whittaker function {\cite[Eq. (9.22)]{integ}}, 
the Gamma function {\cite[Eq. (8.31)]{integ}}, the Beta function {\cite[Eq. (8.384)] {integ}}, and the lower incomplete Gamma function {\cite[Eq. (8.350)] {integ}}, 
%the exponential integral {\cite[Eq. (8.211)] {integ}}, 
%the Psi function {\cite[Eq. (8.36)] {integ}}, and the Euler's constant, 
%and the Gauss hypergeometric function {\cite[Eq. (9.1)] {integ}},
respectively.}
%Also, the operator $\circ$ indicates Hadamard (elementwise) operations.

%\begin{figure}
%	\centering
%	\includegraphics
%	%[width=3.5in,height=2.0in,trim={1.5in 0.5in 0 2.5in},clip]{SysModel1_2.eps}
%	[width=3.3in,height=2.5in,trim={0.5in 0.0in 0 0.5in},clip]{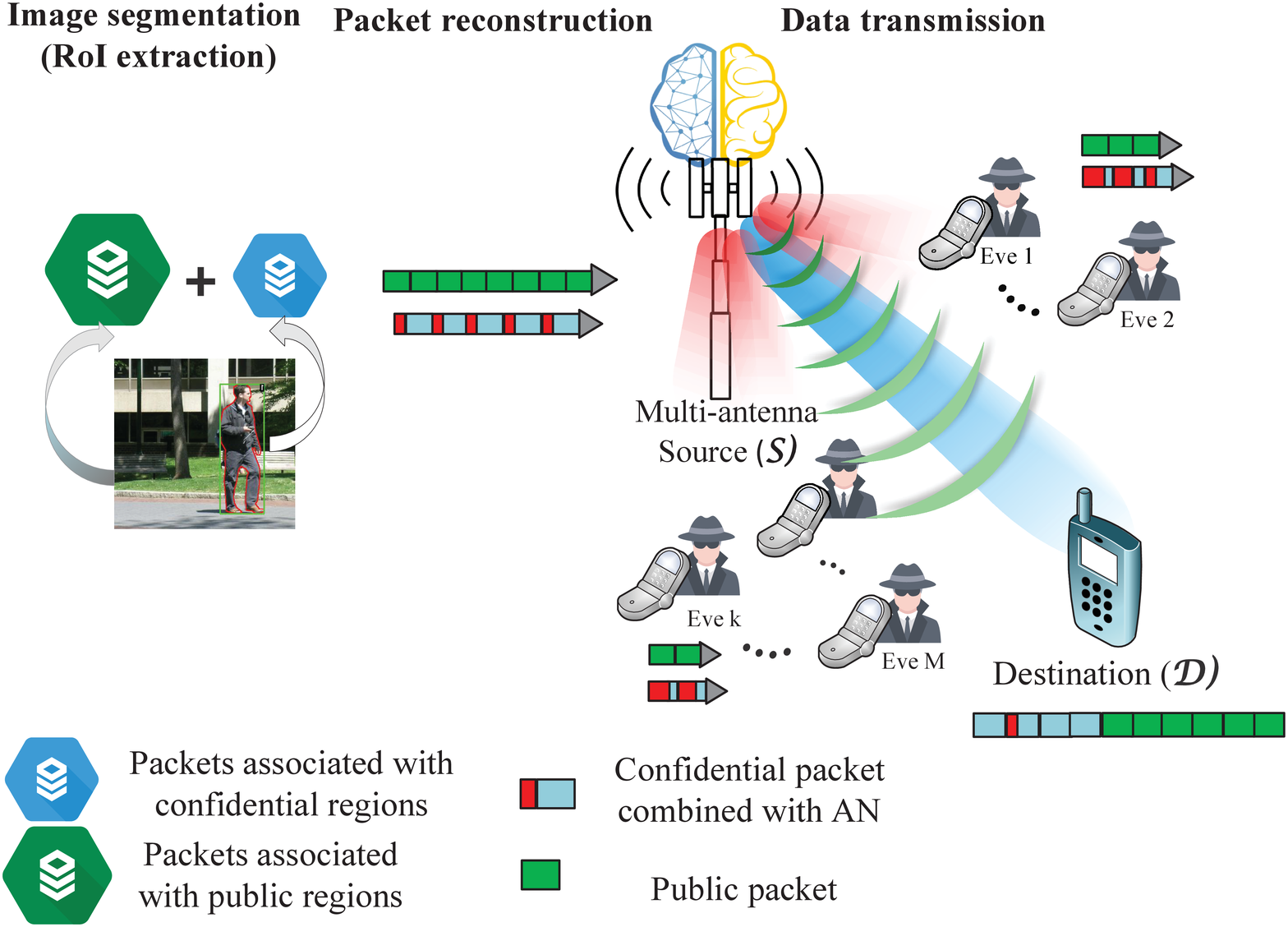}
%	\caption{Proposed Learning-assisted System Model for Packet-based Wireless Image Transmission.}
%	\label{fig:SysModel}\vspace{-3mm}
%\end{figure} 

\begin{figure}
	\centering
	\begin{minipage}{0.45\textwidth}
		\centering
		\includegraphics
			[width=3.3in,height=2.5in,trim={0.5in 0.0in 0 0.5in},clip]{SysModel2_5.eps}
		\caption{\small Proposed Learning-assisted System Model for Packet-based Wireless Image Transmission.}
		\label{fig:SysModel}\vspace{-7mm}
	\end{minipage}\hfill
	\begin{minipage}{0.45\textwidth}
		\includegraphics
		[width=3.2in,height=1.85in,
		trim={0.3in 0.0in 0 0.1in},clip]{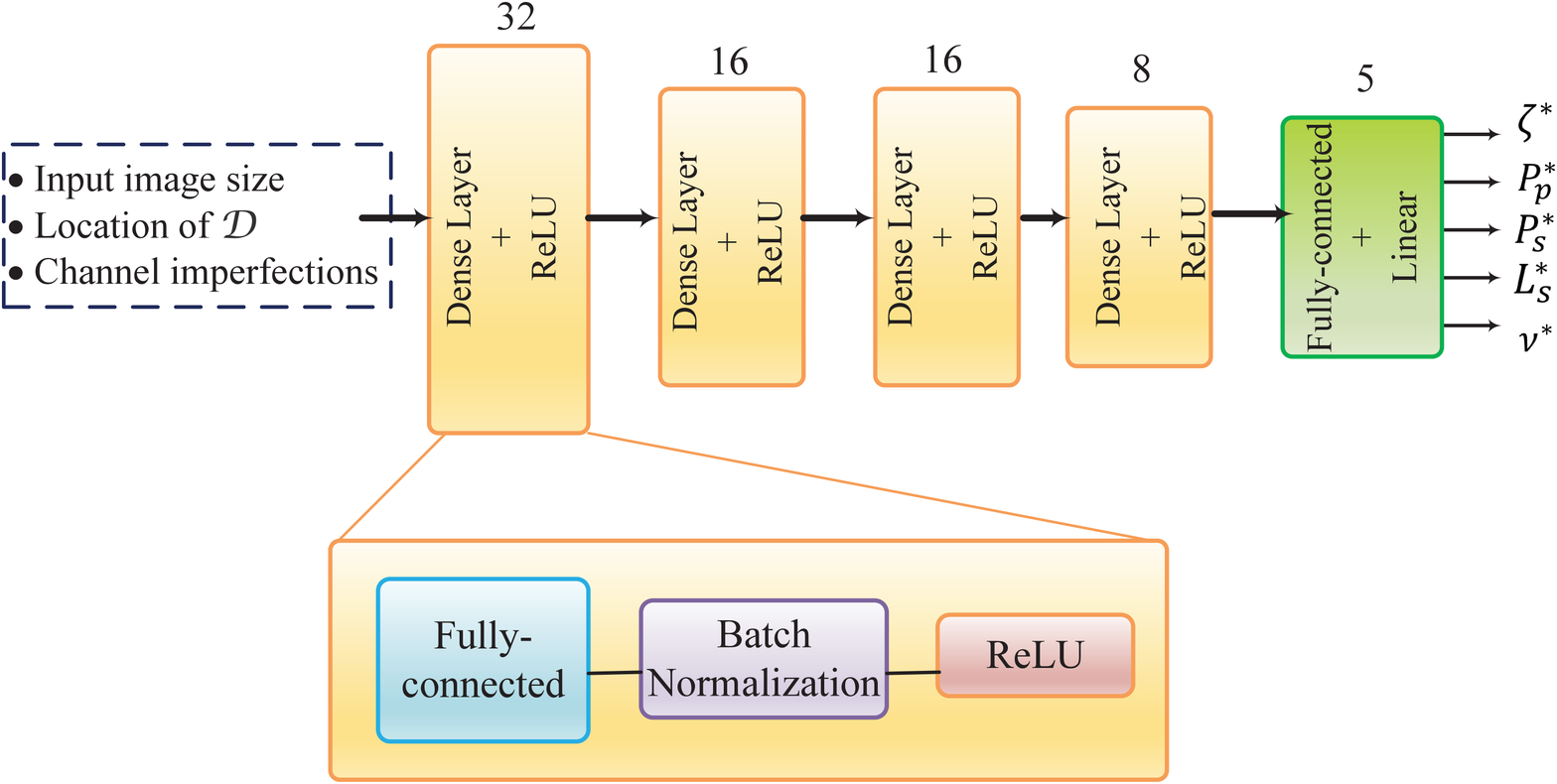}
		\caption{\small Our proposed DNN for content-aware wireless image transmission. The number of neurons in each layer is shown at the top of that layer. }
		\label{fig:DNN-proposed}
		\vspace{-7mm}
	\end{minipage}
\end{figure}

\section{System Model} 	
Consider a MISO ‌wireless image transmission system, where the source node $\mathcal{S}$, equipped with $n_\mathsf{T}$ transmit antennas,  delivers an image file to a single-antenna legitimate destination node $\mathcal{D}$, as depicted in Fig. \ref{fig:SysModel}.
Due to broadcast properties of wireless medium, the transmission is concurrently overheard by 
%$M$ 
randomly distributed single-antenna \emph{totally-passive}  eavesdroppers (Eves), where the locations of Eves are modeled as a homogeneous  PPP $\Phi_E$ of density $\lambda_E$ on a $2$-D plane with the $i$-th Eve at distance $r_i$ 
($i\in\Phi_E$)  
%($i\in\{1,\cdots, M\}$) 
from $\mathcal{S}$.  
We consider two eavesdropping strategies for the existing Eves in our proposed model, i.e., the NCE case, where the wiretap capacity is determined by the maximum capacity among all $\mathcal{S}$-to-Eves links;  and the CE case, in which Eves  cooperate with each other to improving their ability of wiretapping image delivery. 
Without loss of generality, we presume $\mathcal{S}$ to be placed at the origin and $\mathcal{D}$ at a known  position with  distance $r_D$ from $\mathcal{S}$. In other words, the source node is fully aware of the legitimate destination's position, while  presumes a probabilistic model for the location of passive Eves.     

To deliver the image source file, $\mathcal{S}$ first splits the file into $N$ source packets denoted by $\{\pi_1, \pi_2, \cdots,  \pi_N\}$,  each containing the same number $b$ of bits. 
The source packets will further be transformed into a potentially unlimited number of fountain-based coded packets.  
Each coded packet is obtained by XORing a subset of the source packets. More details about fountain-based packet reconstruction will be mentioned later. 
During each TS, one or several fountain packets  form  the transmit frame with duration $T$ at PHY which is then sent out over the wireless channel. The available bandwidth for each slot is $B$ Hz. 
At the end of any time slot, the legitimate receiver $\mathcal{D}$, as well as Eves, performs channel decoding on the received signals.  
Successfully-received packets are acknowledged by sending an ACK packet through  a feedback channel from $\mathcal{D}$ to $\mathcal{S}$.
The feedback signals, do not contain any information regarding the original image. 
%	If the decoding process is successful, the corresponding fountain packet(s) will be recovered and stored for further process. Otherwise, the received frame is ignored.  
The transmission is terminated  when $\mathcal{D}$ gathers $N$ independent fountain-based coded packets to decode the entire original source file or the delay limit arrives.    
To guarantee image delivery security, the number of independent fountain-encoded  packets collected at Eves should be less than $N$ at the time of delivery termination; In addition, the Eves should not obtain any useful information corresponding to sensitive (confidential) regions of an image, e.g., the details of someone's face in a {secrecy-assuring} transmission.\footnote{The operational meaning of this measure is comprehensively studied  in Section \ref{Analysis} of our paper.}

In this paper,  we take into account the fact that not all regions of an image have the same {importance level  from the security/content perspective}.  
For instance, the  RoI packets tend to contain the highest portion of  information regarding the source  image; hence, they need a higher security and reliability requirement. 
On the contrary, the BG regions of an image, in most cases, do not contain  any useful information about the targeted image.   
Accordingly, we propose a hybrid transmission method for the delivery  of source packets  based on their type to ensure a reliable and secure transmission for the entire image file.      
More specifically, the source image is first  divided into two parts:  
the confidential packets, related to the RoI, which is subject to security constraints and tends to contain the highest portion 
%a high amount  
of  information about the entire image; and the {public part}, namely  the background (BG), which can be sent over the air with less stringent security limits, since  revealing it to potential Eves does not leak any useful data.  
Notably, this segmentation can be realized via utilizing   threshold-based image saliency detection methods \cite{sun2020}.\footnote{
		%To extract the RoI, threshold-based segmentation is utilized as a simple and practical method \cite{sun2020}.
		Investigating different methods of image segmentation is out of the scope of this article.}
%via Shannon capacity-achieving rates regardless of the wiretap capability of Eves.   
%	After being segmented, the whole image is
%	divided into $K_{roi}$ ROI source packets and $K_{bg}$ BG source
%	packets. Then ROI source packets and BG fountain packets
%	are encoded with $(N_{roi}, K_{roi})$ RS fountain code and $(N_{roi}, K_{roi})$ RS fountain code, respectively. There will be $N_{roi}$ ROI
%	fountain packets and $N_{bg}$  BG fountain packets. 
Mathematically speaking,  at $\mathcal{S}$'s side and after segmenting the source image,   extracted  packets  form two disjoint sets: $\Pi_\mathsf{s}$,  corresponding to packets with security constraints;  and  $\Pi_\mathsf{p}$, corresponding to  public BG packets,  respectively,  
where $\Pi_\mathsf{s} \cap \Pi_\mathsf{p} = \varnothing$,   $|\Pi_\mathsf{s}|=N_\mathrm{roi}$, and $|\Pi_\mathsf{p}|=N_\mathrm{bg}=N-N_\mathrm{roi}$. 
 $N_\mathrm{roi}$ and $N_\mathrm{bg}$ respectively denote the number of RoI and BG packets. 
%are obtained at $\mathcal{S}$, which 
The two sets of $|\Pi_\mathsf{s}|$ and $|\Pi_\mathsf{p}|$ are illustrated, respectively,  by blue and green stacks in Fig. \ref{fig:SysModel}.   

At each TS, $\mathcal{S}$ estimates its  legitimate link's condition; determines which kind of packet to be sent; and provides appropriate resource for transmission.   
Remarkably, a trained DNN is developed at ‌the source node which is opted for facilitating  the procedure of choosing the best transmission parameters. The details on how to train and utilize the DNN deployed at $\mathcal{S}$ will be discussed  later.    
Selected source packets are then encoded to form a transmitting frame.  
%where the  details on how to construct coded-packets from the original source packets will be elaborated on in the subsequent section.  
If the confidential packets are selected to form the frame, coded packets are further combined with AN, shown by red data packets in Fig. \ref{fig:SysModel}, to ensure   secrecy for the confidential packets $\{\pi \vert \pi \in \Pi_\mathsf{s}\}$.   
The 
%transmitting 
frame is then sent over the wireless medium by exploiting MRT beamforming (together with the injected AN, if a confidential region is going to be  transmitted).    
At the receiving end, both $\mathcal{D}$ and Eves attempt to gather  sufficient number of packets to recover the source image.

All wireless channels are assumed to undergo 
%block fading 
quasi-static flat-fading with a large-scale path loss exponent $\eta>2$, where  	
the channel coefficients remain constant during each time slot and vary independently among different slots. Also, the channel coefficients of different links are independent of each other.  
The channel vector from $\mathcal{S}$  to a node $j$ with a distance $r_j$ is characterized as $\boldsymbol{h}_{sj}r^{-\frac{\eta}{2}}_j$, where $\boldsymbol{h}_{sj}\in \mathbb{C}^{n_\mathsf{T}\times1}$ denotes the small-scale fading vector, with independent and identically distributed (i.i.d.) elements ${h}_{sj,i} \sim \mathcal{CN}(0,1)$.
In order to obtain an estimation regarding the quality of the legitimate link at $\mathcal{S}$, the destination node sends pilot signals at the beginning of each TS, and $\hat{\boldsymbol{h}}_{sd}$ is obtained at $\mathcal{S}$ \cite{derrick-MIMO}.  
%as the estimated channel.   
%It is also assumed that  $\mathcal{D}$ first sends pilot  
%has an imperfect estimation of the legitimate channel $\boldsymbol{h}_{sd}$.   
In this paper, we assume imperfect channel acquisition at legitimate nodes. Accordingly,  the fading vector of the exact main channel from  $\mathcal{S}$ to $\mathcal{D}$, denoted by $\boldsymbol{h}_{sd}$, is different from $\hat{\boldsymbol{h}}_{sd}$ due to 
imperfect channel reciprocity caused by the outdated  CSI \cite{derrick-MIMO}.  
%and $w$ reflects the uncertain part of $\hat{\boldsymbol{h}}_{ds}$ and is a circularly symmetric complex Gaussian random variable with the same variance as $\hat{\boldsymbol{h}}_{ds}$  
%Moreover, $\rho$ can easily be calculated from knowledge of $f_d$ and $T_d$ 
%can be formulated using Jakes’ autocorrelation model [24], [25], such that  
Hence, the fading vector of main channel is modeled as
%	\footnote{\textcolor{gray}{It is assumed that  $\mathcal{D}$ first estimates the legitimate channel via prevalent  estimation methods and then the estimated channel is fed back to $\mathcal{S}$. But, due to the variation of feedback channel, the received CSI  at $\mathcal{S}$ is outdated.}} 
\begin{align}\label{channel-imp}
\boldsymbol{h}_{sd}=\rho\hat{\boldsymbol{h}}_{sd}+\sqrt{1-\rho^2}\hat{\boldsymbol{e}}_{sd}, 
\end{align}
where  $\hat{\boldsymbol{e}}_{sd}$ reflects the uncertain part of $\hat{\boldsymbol{h}}_{sd}$ 
with i.i.d. entries   $\hat{\boldsymbol{e}}_{sd,i} \sim \mathcal{CN}(0,1)$. Moreover, $\rho\in(0, 1)$ denotes the correlation coefficient between $\boldsymbol{h}_{sd}$  and $\hat{\boldsymbol{h}}_{sd}$, which   can easily be calculated from the knowledge of wireless medium parameters \cite{derrick-MIMO}.  
We also define $\hat{g}_{sd}\treq||\hat{\boldsymbol{h}}_{sd}||^2$ as the estimated channel gain of the legitimate $\mathcal{S}$-to-$\mathcal{D}$ link,  where the estimation is done by $\mathcal{S}$.
%\footnote{There also exists a limited-feedback channel from $\mathcal{D}$ to $\mathcal{S}$ to help destination inform $\mathcal{S}$ of its successfully received packets. The feedback signals, which do not contain any information regarding source image, can also be overheard by Eves.} 
We also assume that the CSIs of $\mathcal{S}$-to-Eves are unknown.  %the channel statistics information is available, which is a general assumption in the area of PLS [??]--[??].  
Without loss of generality, we assume that the noise $n_j$ at each node $j$ is statistically independent additive white Gaussian noise (AWGN) and follows $\mathcal{CN}(0,\sigma_n)$.

\section{Design of the Transmission Scheme}
In this section, we present a detailed description of our proposed image transmission scheme, including the strategies for fountain-encoding  of source packets, transmit frame reconstruction, and the signaling design of our proposed semi-adaptive transmission. 
Prior to image delivery, the source file is firstly divided into $N$  source packets corresponding to $N$ non-overlapping regions.
The extracted source packets are then partitioned   into two sets of secure (RoI-related) packets, $\Pi_\mathsf{s}$, and public ones, denoted by $\Pi_\mathsf{p}$.  
%The segmentation 

Without loss of generality, we consider the file delivery during arbitrary slots of $t-1$ and $t$. 
We also emphasize that the allowable delay limit for the image delivery  to take place is $D_{lim}$. Therefore, we have $t\leq D_{lim}$. 
In the $t$-th TS, in order to construct a transmitting data frame,  
$\mathcal{S}$ estimates the instantaneous  gain $\hat{{g}}_{sd}(t)$ 
to obtain an estimate about the quality of legitimate link.  
Hereafter, if the instantaneous gain $\hat{{g}}_{sd}(t)$ exceeds a threshold $\nu$, indicating a sufficiently good link quality, confidential  packets  corresponding to RoI regions are called to construct transmit frame.  Otherwise, the transmitting frame will comprise public packets \cite{basic-medical}.   
%If $\gamma_{sd}$ is less than an
%	outage threshold 
%$\gamma_{v}$ derived later, the $t$-th slot transmission
%	will be suspended. Otherwise, Bob calculates the number of
%	fountain packets to be transmitted (denoted by $L$) during slot
%	$t$ and notifies Alice of the result, which is given by   
%The source node $\mathcal{S}$ then  calculates the number, $L$, of data packets to be transmitted during the $t$-th slot. 
Next step is to establish the 
semi-adaptive transmission strategy. 
That is, if the public packets are selected to construct the frame, an adaptive number of packets (denoted by $L_p$) which are aimed to be transmitted during the $t$-th slot is derived. 
Calculation of $L_p$ is done in a way that  ensures all $L_p$ packets can be   decoded by $\mathcal{D}$ successfully.   
%Calculation of $L$ is based on the decision of $\mathcal{S}$ whether to send packets from $\Pi_s$ or $\Pi_p$ 
On the other hand, if the confidential packets are called for transmission, a fixed rate $R_s$ (bits per channel use) is set at $\mathcal{S}$, and $L_s=\frac{R_s}{b}$ packets from $\Pi_\mathsf{s}$  are pre-processed to be sent over the air.  
Instructions for calculating appropriate $L_p$ and $R_s$ are studied in detail during later sections.   
In order to enhance the level of security for the packet-based transmission scheme,   the raw selected source packets are further encoded into fountain-like packets in the application layer  before designing the transmitting signal at PHY   \cite{sun-Ind-relay,sun-prediction}. 
This is elaborated in the following subsection.

\subsection{Fountain-like Packet Reconstruction}	
According to the feedback of $\mathcal{D}$ to $\mathcal{S}$  during slots $1$ to $t-1$ and the fountain-like  encoding rule which is publicly shared among system nodes, 
$\mathcal{S}$ updates two sets for each of the public and confidential packets, i.e.,   a recovered set $\mathcal{R}_k$,  
%of recovered source packets 
and an unrecovered set $\mathcal{U}_k$, $k\in\{\mathsf{s},\mathsf{p}\}$, 
% of un-recovered source packets can be  updated at  $\mathcal{S}$. 
which are, respectively, composed of all the source packets that have been and have not been recovered at $\mathcal{D}$, by the $t$-th slot.  
Without loss of generality, we consider   $\mathcal{U}_{k}$ has the form $\mathcal{U}_k = \left\{\pi^k_{1}, \pi^k_{2}, \cdots, \pi^k_{|\mathcal{U}_k|} \right\}$, for $k\in\{\mathsf{s},\mathsf{p}\}$.  
%with both the confidential and public source packets ordered from the highest to the lowest level of priority.   
%This issue will be discussed in more detail in Sect. IV.
In time-slot $t$, 
%Alice either remains silent (when $\gamma_{AB}<\gamma_v$)
%or 
%$\mathcal{S}$ constructs a new fountain packet based on $L$. Toward this end, 
considering 
%$\gamma_{AB}<\gamma_v$and
$L_p < |\mathcal{U}_\mathsf{p}|$ (if public packets are scheduled to be transmitted) or $|\mathcal{U}_\mathsf{s}|>L_s$ (otherwise),   
$\mathcal{S}$ calls the most recently recovered source packet $\pi_\mathsf{rec}^k \in \mathcal{R}_k$
%from $\mathcal{R}_\mathsf{p}(t)$($\mathcal{R}_\mathsf{s}(t)$), 
and XORs
it with $L_k$ unrecovered source packets $\{\pi^{k}_{ind}, \cdots , \pi^{k}_{ind+L_k-1}\}$ from $\mathcal{U}_{k}$ to generate $L_k$ fountain-encoded packets, 
where $ind$ denotes the starting index of chosen packets. 
Mathematically speaking,  we have the following encoding strategies to obtain a coded packet denoted by $c_l$ at the application layer  
\begin{align}
	c_l=\begin{cases}\label{xor} 
		\pi^\mathsf{p}_\mathsf{rec}\oplus\pi^\mathsf{p}_{ind+l}, \quad \text{for } 0\leq l\leq L_p-1,  & \text{if } \Psi_0\\
		\pi^\mathsf{s}_\mathsf{rec}\oplus\pi^\mathsf{s}_{ind+l}, \quad \text{for } 0\leq l \leq L_s-1,   &\text{if } \Psi_1
	\end{cases}
\end{align}
where $\Psi_0$ and $\Psi_1$ correspond to the case of public  and confidential data transmission, respectively. Equivalently, $\Psi_0$ corresponds to the case of $\hat{{g}}_{sd}\leq\nu$, while  $\Psi_1$ corresponds to  complementary  case of $\hat{{g}}_{sd}>\nu$, with 
\begin{align}	\mathsf{Pr}(\Psi_1)\stackrel{(a)}{=}1-\mathsf{Pr}(\Psi_0)=e^{-\nu}\sum_{i=0}^{n_\mathsf{T}-1} \nu^i/i!, 
\end{align} 
%The above formula needs a \Gamma function at the de-num. In MATLAB, the Gamma func. is also included in the definition of gammainc, but in the TISP it is not included. 
which follows directly from the fact that   $\hat{g}_{sd}\treq||\hat{\boldsymbol{h}}_{sd}||^2$  obeys Gamma distribution $\mathcal{G}(n_\mathsf{T},1)$ \cite{Digital-commun}.   
%\textcolor{red}{It is worth mentioning that, a method for choosing $ind$ will be discussed later.}
We also note that if $L_p(L_s) \geq |\mathcal{U}_\mathsf{p}|(|\mathcal{U}_\mathsf{s}|)$,  all  source packets in $\mathcal{U}_\mathsf{p}(\mathcal{U}_\mathsf{s})$ are
selected to generate  fountain-coded packets similar to \eqref{xor}.

By invoking  \eqref{xor} and considering the fact that the
fountain-coded packets  will be sent over the wireless
channel (which is discussed in the next subsection), 
%Bob is able to make
%full use of the available channel resources to accumulate the
%maximum number of new source packets, thereby significantly
%decreasing the delay in recovering the source file. On the other hand, 
since the $\mathcal{S}$-to-Eves channels vary independently compared to the legitimate link,  it is unlikely for Eves to be able to decode exactly the same packet(s) as $\mathcal{D}$ does.  
Hence, missing a frame at any Eve, means the linear combination of source packet(s) included in that frame may never be seen by her again.

{\textit{Remark 1:}} \label{Remark1}  
%To further utilize a content-aware  image delivery scheme with delay limit considerations,  
To further exploit the content of the image file opted to be transmitted in a  delay-limited scenario, 
the source packets can be prioritized in a pre-processing step, according to their content.   
Toward this end, pixels with higher saliency values or the ones having edge contents can be considered to have higher levels of importance for the legitimate observer. 
In this regard,  a practical method for determining the level of importance for each source packet   is proposed in  \cite{sun2020}. 
 %which computes  a weighted sum of the average saliency score and the average edge score of each region  to obtain the priority score of different regions (source packets). 
Consequently, the confidential (RoI) and the public (BG) packets can be sorted,  from the highest to the lowest priority. 
% respectively, as $\Pi_\mathsf{s}=   \{\pi^\mathsf{s}_{1}, \pi^\mathsf{s}_{2}, \cdots \pi^\mathsf{s}_{N_\mathrm{roi}}\}$ and $\Pi_\mathsf{p}=   \{\pi^\mathsf{p}_{1}, \pi^\mathsf{p}_{2}, \cdots \pi^\mathsf{p}_{N_\mathrm{bg}}\}$ 
%highlighting how important a source packet is for the receiver to correctly understand the image content. 
Therefore, at each TS, the associated index $ind$ in \eqref{xor} can be set to $1$, indicating that we choose the $L_p(L_s)$ most important packets to be sent during each TS. 
%By doing so, one can further meet the delay limits of a practical system. 

\subsection{Signaling Design for Semi-adaptive Image Transmission}
After preparing fountain-coded packets $c_l$ at the application layer based on \eqref{xor},  
%and the explanations given in Section III-A,   
in this subsection, we elaborate on the design of transmit signal within a given frame at PHY.   
In order to safeguard data security of  confidential packets, the transmit signal is combined with AN, while public BG packets are sent directly, using a beamforming vector $\boldsymbol{w}^\mathsf{T}\treq {\hat{\boldsymbol{h}}^\dagger_{sd}}/{||\hat{\boldsymbol{h}}_{sd}||}$.  Hence, each fountain-coded packet has one of the following forms in the transmitting frame: 
%To transmit sensitive frame from $\mathcal{S}$ to $\mathcal{D}$, the raw signal vector $\boldsymbol{x}_s$  and $\boldsymbol{x}_p$, which are given, respectively, by 
\begin{align}\label{PHY-design}
\begin{cases}
\boldsymbol{x}_p=\sqrt{ P_p}\boldsymbol{w}m_p,  &\text{if } \Psi_0\\
\boldsymbol{x}_s=\sqrt{\zeta P_s}\boldsymbol{w}m_s+\sqrt{\frac{(1-\zeta)P_s}{(n_\mathsf{T}-1)}}
\mathbf{G}\boldsymbol{v}_\mathrm{AN},
&\text{if } \Psi_1  
\end{cases}
\end{align} 
where $\boldsymbol{x}_p$ and  $\boldsymbol{x}_s$ correspond to the transmitted PHY signal of  a public fountain-coded packet and a confidential  one, respectively. Moreover,  $m_k$, $\boldsymbol{v}_{AN}\in\mathbb{C}^{(n_\mathsf{T}-1)\times1}$, and $P_k$ ($k \in\{s,p\}$) denote the unit-power information-bearing signal, i.e., the modulated version of the corresponding encoded packets, the AN with i.i.d. elements $\boldsymbol{v}_i\sim \mathcal{CN}(0,1)$, and the transmit power, respectively.  
We note that   $0<\zeta<1$ shows the power allocation ratio between the information-bearing signal and the AN.  
We also note that the columns of $\mathbf{G}\in\mathbb{C}^{n_\mathsf{T}\times (n_\mathsf{T}-1)}$ and $\boldsymbol{w}$ form an orthogonal basis.  
In other words, the source node $\mathcal{S}$  adjusts its beamforming vector $\boldsymbol{w}$, based on the estimation of legitimate link $\hat{\boldsymbol{h}}_{sd}$, and injects AN to the original signal in a way that the legitimate link is not affected by the injected noise. 
%at the null space of  the  
Consequently, the received signal at $\mathcal{D}$ when sending  public  and  confidential data are  expressed, respectively,  as 
\begin{align}\label{y_D}
y^\mathsf{p}_D&=\sqrt{P_p}\rho r^{-\frac{\eta}{2}}_D||\hat{\boldsymbol{h}}_{sd}||m_p+\sqrt{(1-\rho^2)P_p}r^{-\frac{\eta}{2}}_D\boldsymbol{w}^\mathsf{T}\hat{\boldsymbol{e}}_{sd} m_p+n_D, 
\nonumber\\ 
y^\mathsf{s}_D&=\sqrt{\zeta P_s}\rho r^{-\frac{\eta}{2}}_D||\hat{\boldsymbol{h}}_{sd}||m_s+\sqrt{\zeta (1-\rho^2)P_s }r^{-\frac{\eta}{2}}_D\boldsymbol{w}^\mathsf{T}\hat{\boldsymbol{e}}_{sd} m_s
%\nonumber\\
%&
+ \sqrt{\frac{(1-\zeta)(1-\rho^2)P_s}{(n_\mathsf{T}-1)}}r^{-\frac{\eta}{2}}_D \hat{\boldsymbol{e}}_{sd}^\mathsf{T}
\mathbf{G}\boldsymbol{v}_\mathrm{AN}  +n_D. 
\end{align}
Based on \eqref{y_D}, the instantaneous received signal-to-interference-plus-noise ratio (SINR) at $\mathcal{D}$\footnote{{Notably, the second term in \eqref{y_D} denotes the interfered AN at the legitimate link due to channel uncertainty.}} is formulated as 
\begin{align}\label{gamma_D} 
	\hspace{-2mm}
\gamma^{\mathsf{s}}_{D}=\frac{\zeta\rho^2P_s||\hat{\boldsymbol{h}}_{sd}||^2}{(1-\rho^2)P_s+r_D^{\eta}\sigma_n}, \quad  \gamma^{\mathsf{p}}_{D}=\frac{\rho^2P_p||\hat{\boldsymbol{h}}_{sd}||^2}{(1-\rho^2)P_p+r_D^{\eta}\sigma_n}. 
\end{align}  
Invoking the expression for  $\gamma^\mathsf{p}_D$ in \eqref{gamma_D}, at the beginning of public TSs,  $\mathcal{S}$ calculates the number $L_p$ of public encoded-packets which should be prepared to construct the transmit frame as follows 
\begin{align}\label{L}
L_p=  \floor[\bigg]{\frac{BT}{{b}}\log_2(1+\gamma^\mathsf{p}_{D})}.   
\end{align} 
%where $k$ is chosen from $\{s,p\}$ based on the estimation of $\mathcal{S}$ about $\gamma^{{p}}_{D}$.  
%In other words, if $\gamma^{{p}}_{D}$  which is estimated by $\mathcal{S}$ using channel estimation for $\hat{\boldsymbol{h}}_{sd}$  
%exceeds a threshold $\gamma_{th}(t)$, then  sensitive packets (combined with AN) are called to form the transmitting frame. Otherwise, insensitive packets will be sent. 
Obviously, \eqref{L}  ensures that all $L_p$ public packets can be successfully decoded by Bob without any error using capacity-achieving codes at PHY.   
Therefore, we can meet the delay requirements of packet-based image delivery via utilizing multi-packet transmission of public data.  
Similarly, for the transmission of confidential packets, we also opt for finding a suitable number $L_s$ of transmit  packets during each TS to satisfy the security constraints  as well as addressing the delay  limits. This is 
%comprehensively
studied  in the subsequent section. 

In the case of transmitting confidential  data, it is also of great importance to investigate the received signal $y^\mathsf{s}_{E_i}$  and the instantaneous received SINR $\gamma^\mathsf{s}_{E_i}$ 
%$i \in\{1,\cdots, M\}$, 
at passive Eves for $i \in\Phi_E$. Hence, for the proposed scenario, we have  
\begin{align}
y^\mathsf{s}_{E_i}&= \sqrt{\zeta P_s}\boldsymbol{w}^\mathsf{T} \boldsymbol{h}_{se_i} m_s+\sqrt{\frac{(1-\zeta)P_s}{(n_\mathsf{T}-1)}}\boldsymbol{h}^\mathsf{T}_{se_i}
\mathbf{G}\boldsymbol{v}_\mathrm{AN}+n_{E_i}, 
\nonumber\\
\gamma^\mathsf{s}_{E_i}&=\frac{\zeta P_s r_i^{-\eta} |\boldsymbol{w}^\mathsf{T}\boldsymbol{h}_{se_i}|^2}
{(1-\zeta)P_s||\boldsymbol{h}_{se_i}^\mathsf{T}\mathbf{G}||^2r_i^{-\eta}/(n_\mathsf{T}-1)+\sigma_n \label{gamma_Ei}
},
\end{align}
where $\boldsymbol{h}_{se_i}$ denotes the fading channel from $\mathcal{S}$ to the $i$-th Eve.

\section{Performance Analysis}\label{Analysis}	 
\subsection{Preliminaries}
In this subsection, we first provide two useful lemmas which are used during our subsequent performance analysis.  \begin{lemma}\label{lemma1}
	Considering the NCE case for the 
	%existing 
	randomly located Eves,  
	%in our system model,
	where the equivalent SINR of the wiretap channel is expressed by $\gamma^\mathsf{NCE}_E\treq \underset{i\in\Phi_E}{\max}  \gamma^\mathsf{s}_{E_i}$, the CDF of $\gamma^\mathsf{NCE}_E$ is expressed  as 
	\begin{align}
		F_{\gamma^\mathsf{NCE}_{E}}(\omega)=\exp
		\left( \hspace{-1mm} -\beta \lambda_E 
		\hspace{-1mm}
		\left(\zeta \frac{P_s}{\sigma_n}\right)^{\frac{2}{\eta}} 
		\hspace{-2mm}
		\omega^{-\frac{2}{\eta}} \bigg(1+\frac{1/\zeta-1}{n_\mathsf{T}-1}\omega\bigg)^
		{\hspace{-1mm}1-n_\mathsf{T}}
		\right)\hspace{-1mm},  	 
	\end{align} 
	where  $\beta=\pi\Gamma(1+\frac{2}{\eta})$,  {$\eta$, $\frac{P_s}{\sigma_n}$, and $\zeta$  denote the path-loss exponent, the transmit SNR of confidential packets, and the AN allocation ratio, respectively. } 
\end{lemma}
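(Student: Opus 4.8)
The plan is to compute $F_{\gamma^\mathsf{NCE}_E}$ through the void-probability (equivalently, probability generating functional, PGFL) characterization of the homogeneous PPP $\Phi_E$. Since $\gamma^\mathsf{NCE}_E\treq\max_{i\in\Phi_E}\gamma^\mathsf{s}_{E_i}$, the event $\{\gamma^\mathsf{NCE}_E\le\omega\}$ is exactly the event that \emph{no} Eve attains an SINR above $\omega$. Taking the fading expectation inside the product over points (the marks are i.i.d.\ and independent of the locations) and applying the PGFL of a PPP yields
\[
F_{\gamma^\mathsf{NCE}_E}(\omega)=\exp\!\left(-\lambda_E\int_{\mathbb{R}^2}\mathsf{Pr}\!\left(\gamma^\mathsf{s}_E(r)>\omega\right)dx\right),
\]
where $\gamma^\mathsf{s}_E(r)$ is the per-Eve SINR of \eqref{gamma_Ei} for an Eve at distance $r=\|x\|$ and the probability is taken over the small-scale fading. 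It then remains to evaluate the per-Eve tail and the resulting spatial integral.

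First I would pin down the fading statistics in \eqref{gamma_Ei}. Because $\mathcal{S}$ has no CSI on the eavesdropping links, $\boldsymbol{w}$ is independent of $\boldsymbol{h}_{se_i}$, and $\boldsymbol{w}$ together with the columns of $\mathbf{G}$ form an orthonormal basis of $\mathbb{C}^{n_\mathsf{T}}$. Projecting the isotropic vector $\boldsymbol{h}_{se_i}$ (with i.i.d.\ $\mathcal{CN}(0,1)$ entries) onto these mutually orthogonal directions produces statistically independent components, so that the signal gain $X\treq|\boldsymbol{w}^\mathsf{T}\boldsymbol{h}_{se_i}|^2$ is $\text{Exp}(1)$ while the AN-interference gain $Y\treq\|\boldsymbol{h}_{se_i}^\mathsf{T}\mathbf{G}\|^2$ is $\mathcal{G}(n_\mathsf{T}-1,1)$, with $X$ independent of $Y$.

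Next I would obtain the per-Eve tail. Writing $\gamma^\mathsf{s}_E(r)=\zeta P_s r^{-\eta}X/(bY+\sigma_n)$ with $b\treq(1-\zeta)P_s r^{-\eta}/(n_\mathsf{T}-1)$, I condition on $Y$ and use the exponential tail $\mathsf{Pr}(X>t)=e^{-t}$; averaging the resulting $\exp(-\omega(bY+\sigma_n)/(\zeta P_s r^{-\eta}))$ over $Y$ with the gamma moment generating function $\E[e^{-sY}]=(1+s)^{-(n_\mathsf{T}-1)}$ gives
\[
\mathsf{Pr}\!\left(\gamma^\mathsf{s}_E(r)>\omega\right)=\exp\!\left(-\tfrac{\omega\sigma_n r^\eta}{\zeta P_s}\right)\!\left(1+\tfrac{1/\zeta-1}{n_\mathsf{T}-1}\omega\right)^{\!-(n_\mathsf{T}-1)}.
\]
Passing to polar coordinates, the $\omega$-dependent bracket factors out and the radial integral reduces to $\int_0^\infty e^{-\kappa r^\eta}r\,dr$ with $\kappa=\omega\sigma_n/(\zeta P_s)$; the substitution $u=r^\eta$ turns it into a Gamma integral equal to $\tfrac{1}{2}\Gamma(1+2/\eta)\,\kappa^{-2/\eta}$. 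Recognizing $\beta=\pi\Gamma(1+2/\eta)$ and exponentiating then reproduces the claimed closed form.

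The step I expect to be most delicate is the fading-statistics argument: the tractability of everything downstream rests on $X$ being exactly $\text{Exp}(1)$---so that conditioning on $Y$ collapses to a single MGF evaluation---and on the AN term contributing a clean, $r$-independent factor $(1+\tfrac{1/\zeta-1}{n_\mathsf{T}-1}\omega)^{1-n_\mathsf{T}}$ that separates from the radial integral. Establishing the independence of $X$ and $Y$ and their precise laws via the orthogonality of $\boldsymbol{w}$ and $\mathbf{G}$ is therefore the crux; once it is in place, the PGFL reduction and the $u=r^\eta$ substitution are routine.
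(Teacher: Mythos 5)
Your proposal is correct and follows essentially the same route as the paper's Appendix A: the same fading statistics ($|\boldsymbol{w}^\mathsf{T}\boldsymbol{h}_{se_i}|^2\sim\text{Exp}(1)$ and $\|\boldsymbol{h}_{se_i}^\mathsf{T}\mathbf{G}\|^2\sim\mathcal{G}(n_\mathsf{T}-1,1)$, independent by orthogonality), the same per-Eve tail via conditioning on the Gamma AN term and its MGF, the PGFL of the PPP (your void-probability phrasing is equivalent to the paper's product-form PGFL step), and the same radial Gamma integral, which you evaluate by the substitution $u=r^\eta$ where the paper cites the corresponding table formula.
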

\begin{proof}
	See Appendix A. 
\end{proof} 

\begin{lemma}\label{lemma2}
	Considering the CE scenario for the 
	%existing 
	randomly distributed Eves in our system model, where the instantaneous SINR of the equivalent  wiretap channel is $\gamma^\mathsf{CE}_E\treq\sum_{i\in\Phi_E}^{}  \gamma^\mathsf{s}_{E_i}$,  
	%in our system model,
	% the Laplace transform of equivalent 
	the CCDF of $\gamma^\mathsf{CE}_E$ is given by 
	\begin{align}\label{ccdf}
		\overline{F}_{\gamma^\mathsf{CE}_{E}}(\omega)\stackrel{>}{\approx} \sum_{k=0}^{K}\binom{K}{k}(-1)^k
		\mathcal{L}_{\gamma^\mathsf{CE}_E}\left(\frac{k\varphi}{\omega}\right),  	 
	\end{align} 
	where 
	%$\varphi=K(K!)^{\frac{-1}{K}}$ 
	$K$ represents the number of terms used in the approximation\footnote{It was shown in \cite{laplace} that the provided approximation matches well with the corresponding exact expression for $K>5$.},  
	$\varphi=\frac{K}{\sqrt[K]{K!}}$, and 
	$	\mathcal{L}_{\gamma^\mathsf{CE}_E}(s) $ exhibits the Laplace transform of the equivalent wiretap channel,  which can be expressed as  shown in \eqref{L(s)} at the top of this page, 
	\begin{figure*}
		\begin{align}\label{L(s)}
			\mathcal{L}_{\gamma^\mathsf{CE}_E}(s) 
			&=
			%	\nonumber\\
			%&	
			\exp\left(-2\pi\lambda_E\mathcal{B} \exp({\frac{\varsigma s}{2}}) (a_1s)^{\frac{n_\mathsf{T}-1+2/\eta}{2}}
			W_{\frac{1-n_\mathsf{T}+2/\eta}{2},\frac{2-n_\mathsf{T}-2/\eta}{2}}\big(\varsigma s\big)
			\right)
		\end{align} 
		\hrulefill
	\end{figure*}
	with $\mathcal{B}=\frac{B\left(\frac{2}{\eta},1-\frac{2}{\eta}\right)}{\eta}\varrho^{\frac{1-n_\mathsf{T}+2/\eta}
		{2}}$,  $\varrho=(1-\zeta)\frac{P_s}{\sigma_n}/(n_\mathsf{T}-1)$,  $\varsigma=\zeta (n_\mathsf{T}-1)/(1-\zeta)$, and $a_1=\zeta \frac{P_s}{\sigma_n}$. 
	% and $\xi=\frac{1/\zeta-1}{n_\mathsf{T}-1}$. 
\end{lemma}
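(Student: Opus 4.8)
The plan is to establish the two displayed identities of the lemma in sequence: first the closed-form Laplace transform \eqref{L(s)} of the aggregate wiretap SINR $\gamma^\mathsf{CE}_E$, which is the technical core, and then the finite-sum CCDF approximation \eqref{ccdf}, which follows by invoking a known Laplace-inversion bound. The Laplace transform is obtained by combining the probability generating functional (PGFL) of the PPP with a careful two-stage fading average that ultimately collapses to a Whittaker function.

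For the Laplace transform, I would start from $\mathcal{L}_{\gamma^\mathsf{CE}_E}(s)=\E\big[\exp(-s\sum_{i\in\Phi_E}\gamma^\mathsf{s}_{E_i})\big]=\E\big[\prod_{i\in\Phi_E}e^{-s\gamma^\mathsf{s}_{E_i}}\big]$ and apply the PGFL of the homogeneous PPP $\Phi_E$, converting the product over points into a spatial integral and using polar coordinates (the integrand depends on the Eve position only through $r=|x|$) to get
\begin{align*}
\mathcal{L}_{\gamma^\mathsf{CE}_E}(s)=\exp\left(-2\pi\lambda_E\int_0^\infty\left(1-\E\big[e^{-s\gamma^\mathsf{s}_E(r)}\big]\right)r\,dr\right),
\end{align*}
where $\gamma^\mathsf{s}_E(r)$ is the SINR \eqref{gamma_Ei} of a generic Eve at distance $r$ and the inner expectation is over its fading. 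I would then identify the two fading quantities: since $\boldsymbol{w}$ and $\mathbf{G}$ (built from $\hat{\boldsymbol{h}}_{sd}$) are independent of the unknown channel $\boldsymbol{h}_{se_i}$, rotational invariance of the i.i.d. complex Gaussian vector makes the numerator gain $|\boldsymbol{w}^\mathsf{T}\boldsymbol{h}_{se_i}|^2$ an $\text{Exp}(1)$ variable and the AN-leakage term $Y\treq||\boldsymbol{h}_{se_i}^\mathsf{T}\mathbf{G}||^2$ a $\mathcal{G}(n_\mathsf{T}-1,1)$ variable, with the two mutually independent.

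The heart of the argument is then the evaluation of the double integral. Taking the expectation over the exponential numerator gain first gives the rational form $1-\E[e^{-s\gamma^\mathsf{s}_E(r)}\mid Y]=sa_1/(\varrho Y+r^\eta+sa_1)$. The radial integral is handled by the substitution $t=r^\eta$, which reduces it to the Mellin-type integral $\int_0^\infty t^{2/\eta-1}/(t+\beta)\,dt=B(2/\eta,1-2/\eta)\,\beta^{2/\eta-1}$ (valid for $\eta>2$), producing the Beta-function factor contained in $\mathcal{B}$ with $\beta=\varrho Y+sa_1$. The remaining average over the gamma variable $Y$ takes the form $\int_0^\infty y^{n_\mathsf{T}-2}e^{-y}(y+\varsigma s)^{2/\eta-1}\,dy$, since $sa_1/\varrho=\varsigma s$; this I would evaluate in closed form using the standard integral for $\int_0^\infty x^{\nu-1}e^{-x}(x+a)^{-\lambda}\,dx$ in terms of the confluent hypergeometric function of the second kind $\Psi(\cdot;\cdot;\cdot)$ \cite{integ}, and convert it to a Whittaker function via $W_{\kappa,\mu}(z)=e^{-z/2}z^{\mu+1/2}\Psi(\tfrac12+\mu-\kappa;1+2\mu;z)$. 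Collecting the constants and using the evenness $W_{\kappa,\mu}=W_{\kappa,-\mu}$ to match the second index then yields \eqref{L(s)}. I expect this last Whittaker reduction together with the bookkeeping of the exponents of $(a_1 s)$, $\varrho$, and $\varsigma$ to be the main obstacle, since the parameters of $W$ must be matched exactly and the stray powers must be shown to cancel.

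For the CCDF, I would invoke the binomial identity $(1-z)^K=\sum_{k=0}^K\binom{K}{k}(-1)^k z^k$ with $z=e^{-\varphi\gamma^\mathsf{CE}_E/\omega}$, which together with linearity of expectation gives $\sum_{k=0}^K\binom{K}{k}(-1)^k\mathcal{L}_{\gamma^\mathsf{CE}_E}(k\varphi/\omega)=\E\big[(1-e^{-\varphi\gamma^\mathsf{CE}_E/\omega})^K\big]$. As established in \cite{laplace}, for the choice $\varphi=K/\sqrt[K]{K!}$ the kernel $(1-e^{-\varphi x/\omega})^K$ is a tight and, as indicated by the symbol $\stackrel{>}{\approx}$, conservative approximation to the unit step $\mathbf{1}\{x>\omega\}$, so that the above expectation approximates $\mathsf{Pr}(\gamma^\mathsf{CE}_E>\omega)=\overline{F}_{\gamma^\mathsf{CE}_E}(\omega)$, which delivers \eqref{ccdf} and completes the proof.
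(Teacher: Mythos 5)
Your proposal is correct and follows essentially the same route as the paper: PGFL over the PPP, averaging first over the exponential beamforming gain, the $t=r^\eta$ substitution giving the Beta-function radial integral, and a final gamma average collapsing to a Whittaker function, followed by the binomial-expansion/step-kernel argument for the CCDF. The only differences are cosmetic — the paper invokes Gradshteyn--Ryzhik Eq.\ (3.383.4) directly where you go through the confluent hypergeometric function $\Psi$ plus the symmetry $W_{\kappa,\mu}=W_{\kappa,-\mu}$, and it justifies the step-function kernel via an auxiliary normalized Gamma random variable and its CDF bound rather than by direct citation — so the two derivations are mathematically equivalent.
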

\begin{proof}
	See Appendices B and C. 
\end{proof}

\subsection{QVP Derivation} 
To investigate the performance of our proposed packet-based delay-aware image transmission scheme, the quality-of-service violation probability (QVP) metric is examined. 
In comparison with the well-known secrecy outage probability (SOP) metric, QVP reflects a more comprehensive description regarding the system's performance in terms of the delay, the reliability, and the security level \cite{sun-Ind-relay}.  
To be specific, SOP characterizes the probability with which the file is reliably received at $\mathcal{D}$, without any information leakage at Eves. 
In contrast, QVP calculates the probability with which the file is reliably received at legitimate destination within the given delay bound, without any information leakage at Eves. 
%Therefore, QVP can reflect the reliability, security, and delay performances of a protocol, which is more desired in delay-and-content-aware file transmissions.   
The QVP can be  described as  
\begin{align}\label{qvp}
	\mathcal{P}_{QV}&=\mathsf{Pr}\left(
	T_D>D_{lim}
	\right)+\mathsf{Pr}\left(T_D \leq D_{lim}\right)\mathsf{Pr}\left(T_E\leq T_D\right),
\end{align}
where  $T_D$  denotes the number of TSs required by $\mathcal{D}$  to completely recover the file, and   $T_E$ denotes the number of TSs required by the set of Eves to obtain all $N_\mathrm{roi}$ confidential packets in $\Pi_\mathsf{s}$.    
Inspecting \eqref{qvp}, one can infer that the first term reflects the delay violating probability,
i.e., the probability with which the file cannot be successfully delivered from source to the legitimate  destination within the tolerable delay $D_{lim}$. 
The second term in \eqref{qvp} shows the probability of the event that Eves obtain all $N_\mathrm{roi}$ confidential packets   before the file delivery at destination is over, or  the accumulation of their corresponding  packets is accomplished simultaneously.\footnote{
	We emphasize   that  in this article, in comparison with   \cite{basic-medical} and \cite{sun2020}, we have assumed a more stringent condition on image recovery at Eves by considering the fact that recovering all RoI regions at Eves will suffice the image  delivery to be intercepted completely. {We also control the partial information leakage of RoI packets to Eves, which is addressed in the next subsection.}}
For our proposed scheme, this is interpreted as the file intercept probability (FIP), which illustrates that the image delivery is not secured although the delay limit is met. 
In other words, Eves can obtain their information regarding the entire image as soon as all confidential (RoI-related) packets have been correctly received. 
Hence, $T_E$  equals the number of TSs needed by Eve to correctly get $N_\mathrm{roi}$ RoI fountain packets.  
%\textcolor{gray}{Note that, for the sake of simplicity/brevity, we haven't directly assumed that the system faces with a decoding overhead. Thus, accumulating exactly $N$ RoI fountain-coded packets means that the receiver has accomplished image recovery. The provided analysis can be easliy extended to the more general case of having decoding overhead} 
Summation of the two terms in \eqref{qvp} characterizes the probability with which the intended QoSec of our proposed scheme is violated. 

{\textit{Remark 2:}}\label{remark:QVP} Invoking \eqref{qvp}, one can infer that 
by tending $D_{lim}$  to infinity,  
%in \eqref{qvp}, 
the QVP simplifies to the special case of delay-insensitive systems.
In this case, the QVP is dominated by the second term of \eqref{qvp}, i.e., the FIP. 
%file intercept probability, i.e., $\mathsf{Pr}\left(T_E\leq T_D\right)$.
This indicates that the designers  should contribute  to achieve superiority  for the legitimate link against the wiretap channels to 
%improve the packet accumulation rate at $\mathcal{D}$.  
decrease the FIP. 
%Hence, the file intercept probability can be decreased.  
%Therefore, the packet accumulation at D will be faster than that at E.   
On the other hand,  in a delay-sensitive systems with stringent delay requirements, i.e., small $D_{lim}$, the probability of $\mathcal{P}_{QV}$ is dominated by the first term of \eqref{qvp}.  
%which is the delay violation probability. 
Remarkably, our proposed method
 %, as a general transmission design,  
 considers both the packet accumulation enhancement at $\mathcal{D}$ via utilizing multi-packet transmission \eqref{L},  and the legitimate link improvement against Eves through injecting AN \eqref{y_D}.

In the following, we derive a closed-form expression for the QVP metric. Toward this end, we note that the discrete RV $T_D$ in \eqref{qvp} can be rewritten as 
\begin{align}\label{TD}
T_D=\overline{N}_\mathrm{bg} 
%\boldsymbol{\mathrm{card}}(\mathcal{T}_p)
+{\frac{N_\mathrm{roi}}{L_s}}+N_D^\mathrm{out},
\end{align}  
where $\overline{N}_\mathrm{bg}$
denotes the number of TSs required for delivering all $N_\mathrm{bg}$ public packets.  
Mathematically speaking, we have 
{$\overline{N}_\mathrm{bg}\treq |\mathcal{T}_p|$,
where $\mathcal{T}_p$
%\treq \{t\vert \hat{g}_{sd}(t)<\nu\}$ 
is the set of TSs} used for delivering public data packets.  
%for which we have 
%$\sum_{t\in \mathcal{T}_p}L(t)\geq N_\mathrm{bg}
%>\sum_{t\in \mathcal{T}_p \setminus t_i}L(t)
%$ and $ \hat{g}_{sd}(t_i)<\nu,  \forall t_i\in \mathcal{T}_p$.
%$\boldsymbol{\mathrm{card}}(\mathcal{T}_p)$ 
By invoking \eqref{L} and the definition of $\mathcal{T}_p$, one can infer that  
%$\boldsymbol{\mathrm{card}}(\mathcal{T}_p)$
$\overline{N}_\mathrm{bg}$ 
depends on the legitimate channel condition during file delivery. 
Moreover, although obtaining the  exact expression for $\overline{N}_\mathrm{bg}$  is intractable,  one can approximate  $\overline{N}_\mathrm{bg}$ as follows 
%\cite{sun2020}
\begin{align}\label{N_tilde_bg}
\overline{N}_\mathrm{bg}&\approx 
\ceil[\bigg]{
\frac{N_\mathrm{bg}}{\sum_{k=0}^{+\infty}kp_{\mathrm{bg},k}}
}, 
\end{align}
where the denumerator denotes the expected number of fountain-coded public packets sent within a TS, and  $p_{\mathrm{bg},k}$ shows the probability with which $k$  public packets
are delivered to $\mathcal{D}$ within a slot successfully.   Thus, we have the following expression for $p_{\mathrm{bg},k}$ 
\begin{align}
p_{\mathrm{bg},k}&=
\mathsf{Pr}\Big(k\leq\frac{BT}{b}\log_2(1+\gamma^p_{D})<k+1 \Big\vert\Psi_0\Big)
%\nonumber\\
%&
\stackrel{(a)}{=} 
\left[
\left(
\gamma(n_\mathsf{T},\frac{\gamma_u}{\kappa_p})
-\gamma
(n_\mathsf{T},\frac{\gamma_l}{\kappa_p})
\right)\Big/{\Gamma(n_\mathsf{T})}  
\right] \frac{\mathbf{1}_{(\gamma_l<\gamma_u)}}
{\mathsf{Pr}(\Psi_0)},
%\textcolor{red}{=q_k},  
\end{align}  
where   
$\kappa_p=\frac{\rho^2P_p}{(1-\rho^2)P_p+r_D^{\eta}\sigma_n}$, 
%$\kappa_s=\frac{\rho^2\zeta P_s}{(1-\rho^2)P_s+r_d^{\eta}\sigma_n}$,  
$\gamma_l=2^{kb/(BT)}-1$,  $\gamma_u=\min\{\kappa_p\nu,2^{(k+1)b/(BT)}-1\}$, {and $\nu$ denotes the decision  threshold in \eqref{xor}.}  We note that ($a$) follows from {\cite[Eq. (3.382.5)] {integ}}.  Moreover, it is assumed that  $\gamma_l<\gamma_u$,  
%otherwise we set $p_{\mathrm{bg},k} \hspace{-1mm}=\hspace{-1mm} 0$, 
which is determined by the indicator function $\mathbf{1}_{(\gamma_l<\gamma_u)}$. 

The second term in \eqref{TD}  denotes the required TSs for sending $N_\mathrm{roi}$ confidential packets.  
Finally, $N_{D}^\mathrm{out}$  shows the number of outage events for the legitimate link of \src-to-$\mathcal{D}$.   
%depending on the channel quality of the source-relaylink. 
%Similar to the common definition of outage event, here, the outage is defined as the event that the achievable rate is below the target transmission rate.   
{By defining  $\Upsilon(k,m;p)=\binom{k+m-1}{k}p^k(1-p)^m$}, the following propositions facilitate the derivation of  QVP. 
\begin{proposition}
	The PMF of $T_D$ can be expressed as 
	\begin{align}\label{pmf_TD}
		f_{T_D}(k)&=\mathsf{Pr}\left(
		T_D=k\right)
		%=\mathsf{Pr}\left(
		%N_D^\mathrm{out}=k-\tilde{N}\%right)
 = {\Upsilon\Big(k-\tilde{N},\frac{N_\mathrm{roi}}{L_s};\Omega\Big)}, 		
%==============================
%		\nonumber\\
%		&=\binom{k-\overline{N}_\mathrm{bg}-1}{k-\tilde{N}}\Omega^{k-\tilde{N}}(1-\Omega)^{\frac{N_\mathrm{roi}}{L_{s}}},
%===============================
		%\nonumber\\
		k\geq \tilde{N},   
	\end{align}
	where $\tilde{N}\treq \overline{N}_\mathrm{bg}+\frac{N_\mathrm{roi}}{L_{s}}$, 
	%\begin{align}
	$\Omega = \frac{\mathbf{1}_{(\theta\geq\kappa_s\nu)}}	{\mathsf{Pr}(\Psi_1)}
	\Big(\gamma(n_\mathsf{T},\frac{\theta}{\kappa_s})-\gamma(n_\mathsf{T},\nu)\Big)\Big/{\Gamma(n_\mathsf{T})}$, $\kappa_s=\frac{\rho^2\zeta P_s}{(1-\rho^2)P_s+r_D^{\eta}\sigma_n}$, and $\theta = 2^{\frac{L_sb}{BT}}-1$. 
    %\end{align}  
\end{proposition}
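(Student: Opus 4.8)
The plan is to reduce the PMF of $T_D$ to that of a single negative-binomial random variable by exploiting the additive decomposition in \eqref{TD}. First I would observe that in \eqref{TD} the quantity $\tilde{N}\treq\overline{N}_\mathrm{bg}+\frac{N_\mathrm{roi}}{L_s}$ is treated as a deterministic constant: $\overline{N}_\mathrm{bg}$ is fixed through the approximation \eqref{N_tilde_bg}, and $\frac{N_\mathrm{roi}}{L_s}$ is the fixed number of \emph{successful} confidential slots needed to deliver all $N_\mathrm{roi}$ RoI packets at the fixed rate $R_s=L_s b$ (each success yielding $L_s$ freshly recovered RoI packets through the fountain XOR rule \eqref{xor}). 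Consequently all randomness of $T_D$ is carried by $N_D^\mathrm{out}$, the number of outage slots on the legitimate link, so that $\mathsf{Pr}(T_D=k)=\mathsf{Pr}(N_D^\mathrm{out}=k-\tilde{N})$ for $k\geq\tilde{N}$.

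Next I would argue that outages can occur only on confidential ($\Psi_1$) slots: by the choice of $L_p$ in \eqref{L} every public slot decodes all of its $L_p$ packets error-free, so $\overline{N}_\mathrm{bg}$ contributes no outage and $N_D^\mathrm{out}$ equals the number of failed confidential attempts. Because the fading coefficients are i.i.d.\ across slots, the sequence of confidential slots forms i.i.d.\ Bernoulli trials; conditioned on $\Psi_1$ (i.e.\ $\hat{g}_{sd}>\nu$), a slot is an outage exactly when the fixed rate $R_s$ is not supported, i.e.\ $\gamma^\mathsf{s}_D=\kappa_s\hat{g}_{sd}<\theta$, equivalently $\hat{g}_{sd}<\theta/\kappa_s$. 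Using $\hat{g}_{sd}\sim\mathcal{G}(n_\mathsf{T},1)$ with CDF $F_{\hat{g}_{sd}}(x)=\gamma(n_\mathsf{T},x)/\Gamma(n_\mathsf{T})$ and $\mathsf{Pr}(\Psi_1)=\mathsf{Pr}(\hat{g}_{sd}>\nu)$, the per-slot conditional outage probability is
\[
\Omega=\mathsf{Pr}\!\left(\nu<\hat{g}_{sd}<\tfrac{\theta}{\kappa_s}\,\middle|\,\Psi_1\right)
=\frac{F_{\hat{g}_{sd}}(\theta/\kappa_s)-F_{\hat{g}_{sd}}(\nu)}{\mathsf{Pr}(\Psi_1)},
\]
which reproduces the stated $\Omega$. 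The indicator $\mathbf{1}_{(\theta\geq\kappa_s\nu)}$ emerges naturally: if $\theta\leq\kappa_s\nu$ then $\theta/\kappa_s\leq\nu$, so every $\Psi_1$ slot already satisfies $\hat{g}_{sd}>\nu\geq\theta/\kappa_s$ and the outage probability vanishes.

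With $\Omega$ identified as the per-trial \emph{failure} probability, I would finish by invoking the negative-binomial law: since transmission stops once $m\treq N_\mathrm{roi}/L_s$ confidential slots have succeeded (each with success probability $1-\Omega$), the number of failures $N_D^\mathrm{out}$ occurring before the $m$-th success satisfies $\mathsf{Pr}(N_D^\mathrm{out}=j)=\binom{j+m-1}{j}\Omega^{j}(1-\Omega)^{m}=\Upsilon(j,m;\Omega)$. Substituting $j=k-\tilde{N}$ then yields \eqref{pmf_TD}.

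The main obstacle I anticipate is the modeling justification rather than any hard calculation: one must argue cleanly that (i) the confidential-slot outcomes are genuinely i.i.d.\ Bernoulli trials---this leans on the across-slot independence of the fading and on the fountain structure guaranteeing that each successful slot contributes $L_s$ innovative RoI packets---and (ii) that $\Omega$ must be read as the \emph{outage} probability feeding a negative binomial that counts failures before $m$ successes, not as a success probability. Getting the conditioning on $\Psi_1$ and the boundary behavior encoded by $\mathbf{1}_{(\theta\geq\kappa_s\nu)}$ exactly right is the only delicate point; the remaining steps are a direct application of the incomplete-Gamma CDF and the standard negative-binomial PMF.
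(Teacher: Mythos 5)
Your proof is correct and takes essentially the same route as the paper's Appendix D: decompose $T_D=\tilde{N}+N_D^\mathrm{out}$, note that the adaptive choice of $L_p$ in \eqref{L} makes public slots outage-free, model $N_D^\mathrm{out}$ as a negative binomial RV counting failed confidential slots before $N_\mathrm{roi}/L_s$ successes with per-trial failure probability $\Omega$, and identify $\Omega$ as the conditional outage probability $\mathsf{Pr}\big(\tfrac{BT}{b}\log_2(1+\gamma^\mathsf{s}_D)<L_s\,\big\vert\,\Psi_1\big)$ evaluated through the Gamma law of $\hat{g}_{sd}$. The only cosmetic difference is that you derive $\Omega$ and the indicator $\mathbf{1}_{(\theta\geq\kappa_s\nu)}$ explicitly from the incomplete-Gamma CDF, where the paper simply cites a table formula.
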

\begin{proof}
	See Appendix D. 
\end{proof} 
%In \eqref{pmf_TD}, we note that $f_{T_D}(k)=0$ for $k<\tilde{N}$. 
%Moreover, by defining $\theta\treq2^{R/BT}-1$,
 
As a result, we can derive the expression for the first term of QVP in \eqref{qvp} as 
\begin{align}\label{TD>Dlim}
\hspace{-0.0mm}
\mathsf{Pr}\left(
T_D>D_{lim}	\right)
&=1
\hspace{-0.5mm}-\hspace{-0.5mm}
\sum_{k=\tilde{N}}^{D_{lim}}
\hspace{-1mm}
{\Upsilon\Big(k-\tilde{N},\frac{N_\mathrm{roi}}{L_s};\Omega\Big)}
%\binom{k-\overline{N}_\mathrm{bg}-1}{k-\tilde{N}}\Omega^{k-\tilde{N}}(1-\Omega)^{\frac{N_\mathrm{roi}}{L_{s}}}
\hspace{-1mm}. 
\end{align}

\begin{proposition}
We have the following distribution for $T_E$ 
\begin{align}\label{TE<k}
	\mathsf{Pr}(T_E\leq k)&=
	\sum_{l=\frac{N_\mathrm{roi}}{L_{s}}}^{k}
	%\binom{l-1}{l-\frac{N_\mathrm{roi}}{L_{s}}}\Lambda_c^{l-\frac{N_\mathrm{roi}}{L_{s}}}(1-\Lambda_c)^{\frac{N_\mathrm{roi}}{L_{s}}},
	{\Upsilon\Big(l-\frac{N_\mathrm{roi}}{L_s},\frac{N_\mathrm{roi}}{L_s};\Lambda_c\Big)},
\end{align}
where $c\in \{\mathsf{NCE}, \mathsf{CE}\}$ reflects which type (CE or NCE scenario) of eavesdropping strategy is run by Eves through the network, $\Lambda_\mathsf{NCE} = \mathsf{Pr}(\Psi_1)\exp\Big(
\hspace{-2mm}-\beta \lambda_E (\zeta P_s/\theta)^{\frac{2}{\eta}} (1+\frac{1/\zeta-1}{n_\mathsf{T}-1}\theta)^{1-n_\mathsf{T}}
\Big) 
+\mathsf{Pr}(\Psi_0)$, and $\Lambda_\mathsf{CE} = \mathsf{Pr}(\Psi_1)
\left[1-\sum_{k=0}^{K}\binom{K}{k}(-1)^k
\mathcal{L}_{\gamma^\mathsf{CE}_E}\left(\frac{k\varphi}{\theta}\right)\right]  + \mathsf{Pr}(\Psi_0)$. 
\end{proposition}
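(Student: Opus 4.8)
The plan is to model the image-delivery process, \emph{as seen by the eavesdroppers}, as a sequence of independent Bernoulli trials---one per time slot (TS)---and then to recognize $T_E$ as a negative-binomial waiting time. First I would observe that, because the small-scale fading is i.i.d. across slots (quasi-static block fading) and the fountain/XOR construction of \eqref{xor} makes each successfully-intercepted confidential frame carry $L_s$ fresh, linearly-independent packets, the Eves must accumulate exactly $N_\mathrm{roi}/L_s$ \emph{distinct} successful confidential slots before they can recover all $N_\mathrm{roi}$ RoI packets. I would therefore declare a slot a \emph{success} (from the Eves' viewpoint) whenever confidential data is scheduled \emph{and} correctly decoded, and a \emph{failure} otherwise, so that $T_E$ becomes the index of the $(N_\mathrm{roi}/L_s)$-th success.

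Next I would compute the per-slot failure probability $\Lambda_c$ by the law of total probability over the two mutually exclusive scheduling events $\Psi_0$ (public transmission) and $\Psi_1$ (confidential transmission). A slot fails for the Eves if either public data is sent---contributing $\mathsf{Pr}(\Psi_0)$ outright---or confidential data is sent but the equivalent wiretap SINR stays below the decoding threshold $\theta=2^{L_s b/(BT)}-1$, which is obtained exactly as in \eqref{L} by requiring $BT\log_2(1+\gamma^\mathsf{s}_{E})\geq L_s b$ for the $L_s$ confidential packets of $b$ bits to be decodable within the frame. Hence $\Lambda_c=\mathsf{Pr}(\Psi_0)+\mathsf{Pr}(\Psi_1)\,\mathsf{Pr}(\gamma^c_E\leq\theta)=\mathsf{Pr}(\Psi_0)+\mathsf{Pr}(\Psi_1)\,F_{\gamma^c_E}(\theta)$, where $c\in\{\mathsf{NCE},\mathsf{CE}\}$. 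Substituting the NCE CDF of Lemma \ref{lemma1} at $\omega=\theta$ yields $\Lambda_\mathsf{NCE}$, while for the colluding case I would write $F_{\gamma^\mathsf{CE}_E}(\theta)=1-\overline{F}_{\gamma^\mathsf{CE}_E}(\theta)$ and invoke the Laplace-inversion approximation of Lemma \ref{lemma2} to obtain $\Lambda_\mathsf{CE}$; inserting $\mathsf{Pr}(\Psi_1)$ from \eqref{xor} then reproduces the claimed closed forms.

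Finally, with i.i.d. per-slot trials of failure probability $\Lambda_c$, the waiting time $T_E$ until the $(N_\mathrm{roi}/L_s)$-th success is negative-binomial: the last slot must be a success, and among the preceding $l-1$ slots there must be exactly $N_\mathrm{roi}/L_s-1$ successes, so that $\mathsf{Pr}(T_E=l)=\binom{l-1}{N_\mathrm{roi}/L_s-1}(1-\Lambda_c)^{N_\mathrm{roi}/L_s}\Lambda_c^{\,l-N_\mathrm{roi}/L_s}=\Upsilon\big(l-\tfrac{N_\mathrm{roi}}{L_s},\tfrac{N_\mathrm{roi}}{L_s};\Lambda_c\big)$, using $\binom{l-1}{r-1}=\binom{l-1}{l-r}$ and the definition of $\Upsilon$. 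Summing this PMF from $l=N_\mathrm{roi}/L_s$ up to $k$ gives the stated CDF \eqref{TE<k}. I expect the main obstacle to lie in the modeling justification rather than the algebra: one must argue carefully that \emph{distinct} successful confidential slots contribute \emph{independent} fountain packets (so that $N_\mathrm{roi}/L_s$ successes genuinely suffice and no packet is double-counted), and that the CE failure event is faithfully captured by $F_{\gamma^\mathsf{CE}_E}(\theta)$ in spite of the one-sided Laplace-inversion bound inherited from Lemma \ref{lemma2}.
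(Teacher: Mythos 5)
Your proposal is correct and follows essentially the same route as the paper's Appendix E: both treat each slot as an independent Bernoulli trial with per-slot failure probability $\Lambda_c=\mathsf{Pr}(\Psi_0)+\mathsf{Pr}(\Psi_1)F_{\gamma^c_E}(\theta)$ (total probability over $\Psi_0$, $\Psi_1$, with Lemma \ref{lemma1} for NCE and the Lemma \ref{lemma2} approximation for CE evaluated at $\theta=2^{L_sb/(BT)}-1$), identify $T_E=\frac{N_\mathrm{roi}}{L_s}+N_E^\mathrm{out}$ as a negative-binomial waiting time, and sum the resulting PMF to obtain the CDF. Your added remarks on packet independence across successful slots and the one-sided nature of the CE approximation are points the paper leaves implicit, but they do not change the argument.
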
 
\begin{proof}
See Appendix E. 
\end{proof}
%Consequently, for both cases of NCE and CE scenarios, one can rewrite
%calculate the third term in \eqref{qvp}, 
%we can rewrite $\mathsf{Pr}(T_E\leq k)$ 
%as  

\vspace{-2mm}

Based on the provided discussions in this subsection, by invoking \eqref{qvp}, one can rewrite the QVP metric as 
\begin{align}\label{qvp-2}
\mathcal{P}_{QV}=\mathsf{Pr}\left(
T_D>D_{lim}
\right)+\sum_{k=\tilde{N}}^{D_{lim}}
\mathsf{Pr}\left(T_D=k\right)\mathsf{Pr}\left(T_E\leq k\right).  
%\sum_{k=N}^{D_{lim}}\mathsf{Pr}\left(T_D=k\right)\mathsf{Pr}\left(T_E\leq k\right),
\end{align}
Therefore, substituting  \eqref{pmf_TD}, \eqref{TD>Dlim}, and \eqref{TE<k} into \eqref{qvp-2}  completes the derivation of QVP for our proposed scheme.

\subsection{Data Rate Adjustment to Assure Security for Confidential Packets}
In our  proposed scheme, to meet the delay constraints in a practical transmission scenario, the number of public packets that are called to be sent during each slot is carefully set owing to the multi-packet adaptive transmission of public packets (See Eq. \eqref{L}).    
%In this case, an intrinsic relation  
The focus is now shifted toward the procedure of delivering  confidential packets. 
{With this regard,  we aim to control the partial information leakage of confidential packets to Eves. 
Mathematically speaking,} we ensure the intercept probability (IP) of sending confidential packets to be less than a desired threshold $\epsilon_\text{IP}$.  
The IP metric  describes the probability with which Eves intercept the transmitted confidential frame, given the transmission rate of source.   
Accordingly, by adjusting $\epsilon_\text{IP}$, one can achieve a desired security level for the proposed scheme. 
Based on the aforementioned discussions, one can satisfy the following inequality 
\begin{align}\label{SOP}
	\mathcal{IP} & \treq  %\nonumber\\ &   
\mathsf{Pr}(\Psi_1)
\mathsf{Pr}\bigg\{\frac{BT}{b}\log_2(1+\gamma^c_E)>{L}_{s}	\Big\vert\Psi_1\bigg\}
\leq \epsilon_\text{IP}. 
\end{align}
where $c\in\{\mathsf{NCE},\mathsf{CE}\}$. 
Then, \eqref{SOP} can be simplified to obtain an acceptable range for the number, ${L}_{s}$, of confidential packets  that should be chosen  to form confidential data frame. 
 Hence, the following proposition is provided.  
\begin{proposition}\label{prop1}
The acceptable range for the number ${L}_{s}$ of  packets  comprising confidential data, based on limiting the transmission scheme  to have an IP less than or equal to $\epsilon_\text{IP}$ is given by  
\begin{align}\label{L_range1}
	{L}_{s}\geq \frac{BT}{b}\log_2\left(1+F_{\gamma^c_E}^{-1}
	\left(1-\frac{\epsilon_\text{IP}}{\mathsf{Pr}(\Psi_1)}\right)
	\right), 
\end{align}
where $F_{\gamma^c_{E}}$ 
%for both cases of $\mathsf{NCE}$ and $\mathsf{CE}$ 
is derived in Lemma \ref{lemma1} and \ref{lemma2}.  
\end{proposition}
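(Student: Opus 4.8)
The plan is to start directly from the intercept-probability (IP) constraint in \eqref{SOP} and unwind it into a condition on $L_s$ by isolating the equivalent wiretap SINR $\gamma^c_E$ and then inverting its distribution. Since $\mathsf{Pr}(\Psi_1)$ is strictly positive, I would first divide both sides of \eqref{SOP} by it, reducing the requirement to the conditional bound $\mathsf{Pr}\{(BT/b)\log_2(1+\gamma^c_E)>L_s \mid \Psi_1\}\leq \epsilon_\text{IP}/\mathsf{Pr}(\Psi_1)$. The event inside this probability is a strictly increasing function of $\gamma^c_E$, so solving $(BT/b)\log_2(1+\gamma^c_E)>L_s$ for $\gamma^c_E$ yields the equivalent threshold event $\gamma^c_E > 2^{bL_s/(BT)}-1$.

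Next I would rewrite the resulting probability as the CCDF of the equivalent wiretap SINR evaluated at that threshold, $\overline{F}_{\gamma^c_E}(2^{bL_s/(BT)}-1) = 1 - F_{\gamma^c_E}(2^{bL_s/(BT)}-1)$, where $F_{\gamma^c_E}$ is the CDF supplied by Lemma \ref{lemma1} in the NCE case and recovered from the CCDF of Lemma \ref{lemma2} in the CE case. The IP constraint then reads $F_{\gamma^c_E}(2^{bL_s/(BT)}-1) \geq 1 - \epsilon_\text{IP}/\mathsf{Pr}(\Psi_1)$. Because a CDF is non-decreasing, I would apply the (generalized) inverse $F_{\gamma^c_E}^{-1}$ to both sides to peel the SINR threshold out of the CDF, giving $2^{bL_s/(BT)}-1 \geq F_{\gamma^c_E}^{-1}(1-\epsilon_\text{IP}/\mathsf{Pr}(\Psi_1))$. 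Adding one, taking $\log_2$, and multiplying by $BT/b$ — all monotone operations — then reproduces the claimed lower bound \eqref{L_range1}.

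The only genuinely delicate point is the inversion step, and I expect the associated admissibility and monotonicity bookkeeping to be the main obstacle rather than the algebra. The inverse is well defined only when the target quantile $1-\epsilon_\text{IP}/\mathsf{Pr}(\Psi_1)$ lies in the range of $F_{\gamma^c_E}$; this requires $\epsilon_\text{IP}\leq \mathsf{Pr}(\Psi_1)$, since otherwise the quantile is negative and the IP constraint is satisfied vacuously for every $L_s$. One must also argue that $F_{\gamma^c_E}$ is increasing on the relevant interval so that $F_{\gamma^c_E}^{-1}$ is single-valued: for the closed-form NCE expression of Lemma \ref{lemma1} this is immediate by inspection, whereas in the CE case the inversion is carried out on the finite-$K$ Laplace-transform approximation of Lemma \ref{lemma2}, so the resulting bound inherits that approximation's accuracy. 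Once these conditions are noted, the remaining derivation is routine rearrangement of monotone maps.
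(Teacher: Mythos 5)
Your proposal is correct and follows essentially the same route as the paper, whose proof is exactly the "straightforward manipulation of \eqref{SOP} plus monotonicity of the CDF" that you carry out in detail. Your added remarks on admissibility (requiring $\epsilon_\text{IP}\leq \mathsf{Pr}(\Psi_1)$ for the quantile to be meaningful) and on the finite-$K$ approximation inherited in the CE case are sound refinements that the paper leaves implicit.
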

The proof of Proposition \ref{prop1} can be obtained  through a straightforward manipulation on the definition of IP in \eqref{SOP} and using the fact that the CDF $F_X(x)$ is a monotonically increasing function on $x$. 

%{\textit{Remark 3:}} 
% Inspired by the major paradigm shift   from the conventional client-server networking  towards the new area of end-to-end (E2E) QoS-aware services, 
% one can expect that the future service level agreements (SLAs) will be comprised of 
%  % bring about 
% guarantees about the quality-of-security (QoSec) as well. 
% This new terminology, will include   security level adjustment and is visioned to bring about dynamic, content-aware security solutions for the system \cite{6G-PLS}.  
% Our flexible  
%  %content-aware
%  security-assuring approach  proposed in this paper can hopefully shed light on the roadmap towards QoSec for interested researchers.   

\section{Optimized Learning-Based Image Transmission} 
In this section, we aim to optimize the network parameters  to enhance the overall system performance.   
By invoking \eqref{pmf_TD}--\eqref{qvp-2}, one can infer that the QVP is affected by different channel conditions, as well as the image size and the location of the legitimate node being serviced.   
More specifically,  every time a new image is scheduled to be transmitted to the legitimate receiver $\mathcal{D}$, or when a new device is authenticated to 
%be a legitimate receiver ($\mathcal{D}$) 
receive packets from $\mathcal{S}$,  the source  
%$\mathcal{S}$ utilizes deep learning 
should adjust its transmission parameters, including the power allocation ratio $\zeta$ between information and AN, the transmission power for public and confidential packets, denoted by $P_p$ and $P_s$,  respectively, and the decision threshold $\nu$ on whether to send public or confidential packets. 
The  transmission rate $R_s=L_sb$ of confidential packets 
%can be set as well. 
%needs to 
can be  optimized as well.   
Moreover, the transmitter should be robust against different states of channel imperfection, which is modeled by parameter $\rho$ in this paper.      

Mathematically speaking, for each complete round of image delivery from $\mathcal{S}$ to a legitimate receiver,   the optimal transmission parameters  $(\zeta^\ast,P^\ast_p,P^\ast_s,\nu^\ast)$
to minimize the QVP metric
can be found from the following optimization problem   
%\begin{argmaxi}
%	{w}{f(w)+ R(w+6x)}
%	{}{}
%	\addConstraint{g(w)}{=0}
%\end{argmaxi}
\begin{equation}
% We could also define an opt problem with the aim of minimizing the weighted sum of SOP and COP during "each" trans. slot.... 
\begin{aligned}\label{opt}
%(\zeta^\ast,P^\ast_p,P^\ast_s,\nu^\ast,L^\ast_{s}) =  
\underset{\zeta,P_p,P_s,\nu,L_{s}}{\textrm{minimize}} \quad & \mathcal{P}_{QV}
%(\zeta,P_p,P_s,\nu,L_{s})
\\
\textrm{s.t.} \quad  
& \mathcal{IP} \leq \epsilon_\text{IP},  \hspace{2mm}
%\\ 
%&
0 <\zeta \leq 1, \hspace{2mm} \nu>0, \hspace{2mm} 
%\\
%& 
\gamma_\mathrm{min}<\frac{P_s}{\sigma_n},
\frac{P_p}{\sigma_n}\leq \gamma_\mathrm{max},  \hspace{2mm}
%\\
%& \nu>0, \\
%&
L_{s} \vert  N_\mathrm{roi},     
\end{aligned}
\end{equation}
where $\gamma_\mathrm{min}$ and   $\gamma_\mathrm{max}$  represent the minimum and the maximum available transmit SNR at the source node $\mathcal{S}$.   
In addition,  the last constraint indicates that $L_s$ is a divisor of $N_\mathrm{roi}$.   
%\begin{equation}
%	% We could also define an opt problem with the aim of minimizing the weighted sum of SOP and COP during each trans. slot.... 
%	\begin{aligned}\label{opt-conf}
%		(\zeta^\ast,P^\ast,\nu^\ast) = \textrm{min} \quad & w\mathsf{IP}(\zeta,P_s,\nu)+(1-w)
%		\mathsf{COP}(\zeta,P_p,\nu)\\
%		\textrm{s.t.} \quad 
%		& 0 <\zeta \leq 1, \\
%		& 0<P_s,P_p\leq P_{\max}, \\
%		& \nu>0.   
%	\end{aligned}
%\end{equation}
%with 
%	\begin{align}
%	\mathsf{IP}&=\mathsf{Pr}\left\{\log_2(1+\gamma_{E})>\frac{bL_\mathrm{roi}}{BT}\vert\Psi_1\right\}
%	\nonumber \\
%	&\stackrel{(a)}{=}
%	1-\exp
%	\left(-\beta \lambda_E (\zeta P_s/\vartheta)^{\frac{2}{\eta}} \bigg(1+\frac{1/\zeta-1}{n_\mathsf{T}-1}\vartheta\bigg)^{1-n_\mathsf{T}}
%	\right),  	
%\end{align} 
%and 
%\begin{align}
%	\mathsf{COP}&=\mathsf{Pr}\left\{\log_2(1+\gamma_{D})>\frac{bL_\mathrm{roi}}{BT}\right\}
%	\nonumber \\
%	&
%	\stackrel{(a)}{=}
%\gamma(n_\mathsf{T},\frac{\theta}{\kappa_s})-\gamma(n_\mathsf{T},\nu)/\Gamma(n_\mathsf{T}),  	
%\end{align} 

The proposed optimization problem  is a mixed-integer non-linear programming (MINLP) due to its non-linear  (and highly non-convex) objective and constraints.   
Therefore, obtaining an analytic expression for the optimal transmission parameters $(\zeta^\ast,P^\ast_p,P^\ast_s,\nu^\ast,L_s^\ast)$ is intractable. 
Traditionally, \eqref{opt} could be solved   via numerical algorithms  leading to a large computational complexity.  
In contrary, we propose a learning-based scheme to solve \eqref{opt} in an  efficient way.  

We now elaborate on our learning-based scheme to effectively attain optimal transmission parameters.  
In our proposed scheme, $\mathcal{S}$ utilizes a deep neural network (DNN) to learn the non-trivial mapping from the 
system's configurations, i.e.,  
input image size, including the confidential and public packets, the location of the legitimate receiver, and  the  channel statistics to the optimal transmission parameters that minimizes the QVP. 
This  mapping can be formulated as
\begin{align}\label{func_approximator}
%\left(\zeta^\mathrm{opt},P^\mathrm{opt}_p,P^\mathrm{opt}_s,\nu^\mathrm{opt},L_s^\mathrm{opt}\right)
\left(\zeta^\ast,P^\ast_p,P^\ast_s,\nu^\ast,L_s^\ast\right)
=\mathcal{F}
\left(
%\{\Pi_p,\Pi_s\},
 N_\mathrm{roi}, N_\mathrm{bg}, 
 r_D, \rho
%||\boldsymbol{\hat{h}}_{sd}||
\right). 
\end{align} 
In the following, we propose our DNN's architecture for solving \eqref{func_approximator}, which is composed of fully-connected  layers in a feedforward network. 
It is worth mentioning that the DNNs, 
specifically the feedforward neural networks,  
have shown to be capable of performing very complex tasks and obtain an input-output map that approximates any measurable function  \cite{Goodfellow}. 
%Formally, the following universal approximation result holds [29, Theorem 1]. 
Therefore, utilizing DNNs is an efficient strategy to solve the problem in \eqref{opt}.  

%\textcolor{gray}{This is in contrast to available online power control methods based
%on the traditional use of optimization theory, which need to
%be run from scratch every time one or more system channel
%realizations have changed. }

%\begin{figure}\centering
%	\includegraphics
%	[width=2.8in,height=1.5in,
%	trim={0.0in 0.0in 0 0.3in},clip]{SysModel2_6.eps}
%	\caption{General architecture of a feedforward DNN with fully-connected
%		layers.}
%	\label{fig:DNN-general}
%	\vspace{-3mm}
%\end{figure}

\subsection{Network Architecture}  
We first provide a brief overview of  feedforward DNNs.   
  A general architecture for a feedforward neural network with fully-connected layers
   %is illustrated in Fig. \ref{fig:DNN-general}, which
   is composed of an input layer, an output layer, and $K$ hidden layers.    The output of each layer is the input of its sequential layer.  
  The $k$-th layer, $k=1, \ldots, K+1$, has $u_{k}$ neurons, and the $u_k \times 1$ output vector of the $k$-th layer ($2\leq k\leq K+2$)
  can be written  as
  %where a  neuron $n$ computes  
\begin{align}\label{TransferFunction}
	\boldsymbol{x}_{k}=f_{k}\left(\mathbf{W}_k	\boldsymbol{x}_{k-1}+\boldsymbol{b}_k\right), 
\end{align}
where $f_k$ is the activation function  of layer $k$, $\mathbf{W}_k$ and $\boldsymbol{b}_k$ respectively denote the weight matrix and the bias vector of the $k$-th layer. 
%$\zeta_{k} = (\zeta_k(1), \dots, \zeta_k(N_{k+1}))$ 
%denotes the $N_{k+1}\times 1$ output vector of layer $k$, 
%$\gamma_{n,k}\in\mathbb{R}^{N_{k-1}}$ and $\delta_{n,k}\in\mathbb{R}$ are neuron-dependent weights and bias terms, respectively, while $f_{n,k}$ is the activation function  of neuron $n$ in layer $k$.
Notably, $\boldsymbol{x}_0$ is the input vector to the DNN. 
According to \eqref{TransferFunction}, each neuron is responsible for computing  quite lightweight operations. 
 The combination of  multiple neurons through stacked fully-connected  layers, the proposed DNN can   obtain an overall input-output mapping to emulate desired functions {\cite [Theorem 1] {universal}}.  
We can now elaborate on our proposed DNN for the content-aware wireless image transmission scheme.     
The problem is to fine-tune the weights and biases of \eqref{TransferFunction} to efficiently estimate \eqref{func_approximator}.

%\begin{figure}
%		\centering
%	\begin{minipage}{0.45\textwidth}
%		\centering
%	\includegraphics
%	[width=3.5in,height=1.85in,
%	trim={0.3in 0.0in 0 0.1in},clip]{SysModel2_7.eps}
%	\caption{Our proposed DNN for content-aware wireless image transmission. The number of neurons in each layer is shown at the top of that layer. }
%	\label{fig:DNN-proposed}
%	\vspace{-3mm}
%		\end{minipage}\hfill
%\begin{minipage}{0.45\textwidth}
%\centering
%\includegraphics
%[width=2.8in,height=1.5in,
%trim={0.0in 0.0in 0 0.3in},clip]{SysModel2_6.eps}
%\caption{General architecture of a feedforward DNN with fully-connected
%	layers.}
%\label{fig:DNN-general}
%\vspace{-3mm}
%\end{minipage}
%\end{figure}

Considering an E2E realization of image delivery from the source node, $\mathcal{S}$, to a legitimate destination $\mathcal{D}$, we utilize a feedforward DNN, in which the following    information are fed into the network as input:‌ 
The number of confidential and public source packets; the level of wireless channel imperfection estimated at $\mathcal{S}$, which was modeled via the  parameter $\rho$ in \eqref{channel-imp}; and, the location of the  legitimate destination $\mathcal{D}$.      
We opt for producing  vector $\boldsymbol{x}_{K+2} = [\zeta^\ast,P^\ast_p,P^\ast_s,\nu^\ast,L_s^\ast]^T$ corresponding to the optimal transmission parameters at the output layer.    
As illustrated in Fig. \ref{fig:DNN-proposed}, 
between the input and output layers, $K=4$ dense layers are deployed each of which contain a fully-connected layer followed by a batch normalization (BN) block. 
Notably, the BN is performed with the aim of enhancing the generalization properties of our DNN, and making the learning process 
%faster and 
more stable.   {The BN block performs $\hat{{x}}_{k,i} = \frac{{x}_{k,i} - \mu_{\mathcal{B}}}{\sqrt{\sigma^2_{B}+\epsilon}}, 
\quad  
{y}_{k,i}=\lambda_i \hat{{x}}_{k,i} + \delta_i,$
% the following adjustments 
on its input $\boldsymbol{x}_k$, 
%$2\leq k \leq K+1$ 
$k\in\{2\cdots,K+1\}$
for all $\boldsymbol{x}_k$'s in a mini-batch,}  
%of size $m$:‌   
%\begin{align}\label{BN}
%	\hspace{-2mm}
%	\hat{{x}}_{k,i} = \frac{{x}_{k,i} - \mu_{\mathcal{B}}}{\sqrt{\sigma^2_{B}+\epsilon}}, 
%	\quad  
%{y}_{k,i}=\lambda_i \hat{{x}}_{k,i} + \delta_i,	
%\end{align}
where ${x}_{k,i}, i \in \{1,\cdots,u_{k}\}$ is the $i$-th input  of BN block at the $k$-th layer, $y_{k,i}$ is the corresponding $i$-th output of the BN block,  $\lambda_i$ and $\delta_i$ are  learnable parameters along with the other network parameters.   
Moreover, $\mu_\mathcal{B}$ and $\sigma^2_\mathcal{B}$ are the mean and variance of training data over a mini-batch of size $m$, while  $\epsilon$ is for the sake of stability.       
 Activation functions of hidden neurons in our proposed DNN are chosen to be rectified
 linear unit (ReLU), which is defined as 
 $f^\mathsf{ReLU}_k(x)\treq\max(0,x)$, $2\leq k \leq K+1$.  
%\begin{align}
%	f^\mathsf{ReLU}_k(x)=\max(0,x),\quad 2\leq k \leq K+1
%\end{align}
 The output layer is obtained from a fully-connected layer followed by a linear activation function.      

%{\textit{Remark 4:}} 
%An important factor in the design of a DNN  is the choice of activation function for the output layer, in which we employ a linear activation function. 
%This choice is seemingly in contrast with the fact that the transmission  parameters have positive values. 
%Nevertheless, imposing this constraint through utilizing an output activation function can misguide the DNN. 
%In other words, inserting a  cut-off-based  activation function 
%might result in low training errors, but it is not due to the fact that the  hidden layers' weights and biases are configured properly. 
%In this case, the DNN is not aware that the training and validation errors are acceptable simply because the clipping policy  at
%the output  is providing the training error, regardless of the hidden layers adjustment. 
%On the contrary, a linear activation function at the output layer helps the DNN learn the desired configuration of weights and biases which  leads to an actual training error.
 
\subsection{Training of the Proposed DNN: Train Offline, Use Online} 
In order to find an appropriate estimation for \eqref{func_approximator},  the weights and biases in \eqref{TransferFunction} 
%and \eqref{BN} 
need to be adjusted. 
Accordingly, the weights $\mathbf{W}=[\mathbf{W}_2,\cdots,\mathbf{W}_{K+2}]$ and the biases $\mathbf{B}=[\boldsymbol{b}_2,\cdots,\boldsymbol{b}_{K+2}]$ as well as the scale and shift parameters of BN blocks in hidden layers, denoted respectively by $\mathbf{\Lambda} = [\boldsymbol{\lambda}_2,\cdots,\boldsymbol{\lambda}_{K+1}]$ and $\mathbf{\Delta} = [\boldsymbol{\delta}_2,\cdots,\boldsymbol{\delta}_{K+1}]$,  
are configured. 
This adjustment is carried out in a supervised manner via training our DNN with a training set 
$\mathcal{T} = \left\{ ( \boldsymbol{r}_n,  \boldsymbol{p}_n^{\ast}) \right\}$, $n = 1, \cdots, \mathsf{N}_\mathcal{T}$, with $\mathsf{N}_\mathcal{T}=|\mathcal{T}|$ training tuples, where $\boldsymbol{p}_n^{\ast} = 
[\zeta^\ast,P^\ast_p,P^\ast_s,\nu^\ast,L_s^\ast]^T$ is the vector of 
%optimal 
desired
transmission parameters corresponding to a realization  of system with configuration  parameters $\boldsymbol{r}_n = [N_\mathrm{roi}, N_\mathrm{bg}, 
r_D, \rho]^T$. 
%In fact, the training set contains examples of desired power allocation vectors corresponding to some possible configurations of system parameters $\vec a_n$.  
 Note that  the training set can be obtained via prevalent numerical methods for solving an optimization problem through an offline phase {\cite{learning-based,Besser_Opt}}. 
 % prior to the real-time transmissions \cite{Besser_Opt}. 
 %This will be elaborated on later. 
Using the examples provided in $\mathcal{T}$, the DNN gradually  learns to predict the trasmission parameters for new realizations of $\boldsymbol{r}_n$  as well. 
 Mathematically speaking, the training process opt for adjusting the weights and biases of our DNN with the goal of  minimizing the loss between actual and desired output vector, which is formulated as follows
\begin{align}\label{TrainingMin}
	\hspace{0mm}
\underset{{\mathbf{W,B,\Lambda,\Delta}}}{\textrm{minimize}}	
%\displaystyle\min_{}\; 
\hspace{2mm}
\frac{1}{\mathsf{N}_\mathcal{T}}
\hspace{-1mm}
\sum_{n=1}^{\mathsf{N}_\mathcal{T}}\hspace{-1mm}
{\ell}\left(\boldsymbol{x}^{(n)}_{K+2}(\mathbf{W,B,\Lambda,\Delta}), {\boldsymbol {p}}_{n}^{*}\right), 
\end{align}
where $\boldsymbol{x}^{(n)}_{K+2}(\mathbf{W,B,\Lambda,\Delta})$  denotes the output of DNN corresponding to the $n$-th training input,  $\boldsymbol {p}_{n}^{*}$ is the desired output, and   ${\ell}(\cdot,\cdot)$ is any desired  error measure between these two.  
 We employ mean-squared-error (MSE) 
 $||\boldsymbol{x}^{(n)}_{K+2}(\mathbf{W,B,\Lambda,\Delta})-{\boldsymbol {p}}_{n}^{*}||^2$
 as a widely-used error measure  in this paper.  
  The minimization of \eqref{TrainingMin} can be handled  by 
  %the  state-of-the-art, 
  off-the-shelf  gradient descent-based methods specifically developed for training DNNs \cite{Goodfellow}, which is not reviewed here.\footnote{
  	%Among different gradient-based optimizers, 
  	We have chosen the widely-adopted  adaptive moment estimation (Adam) optimizer algorithm for minimizing \eqref{TrainingMin} \cite{adam}.  The convergence of our training process is validated in the simulation results.}

After that the training phase is completed, i,e, the minimization problem of \eqref{TrainingMin} converges to a relatively low MSE,  
%all weights and biases of the {ann} are configured and 
our DNN achieves an acceptable approximation for the  mapping in \eqref{func_approximator}. 
 Afterward,  when a new configuration, ${\boldsymbol{r}^\mathsf{new}}$, is defined for the system, i.e., a new image is scheduled to be sent or a new legitimate node is signed in to the network,  
 the corresponding transmission parameters can be obtained without the need to solve \eqref{opt} again. 
Instead,  it suffices to compute the output of the trained DNN, using a forward propagation  with the new input ${\boldsymbol{r}^\mathsf{new}}$ in a real-time manner.  

{\textit{Remark 3:}}
Compared with the conventional approaches which mandate the system to perform iterative methods from scratch---which actually imposes much delay on practical systems---every time one or more configurations are altered (e.g.,  the source should send an image),    
our proposed procedure brings about  a significant  decrease in computational complexities during the online phase of using the trained  DNN. 
%We also emphasize that the required offline computations are performed sporadically.    
This is actually in line with the requirements of  our practical delay-aware model.    
More details are addressed in the following subsection. 
%to determine the desired transmission parameters.   %This point is analyzed in more detail in the  next section.  
 
\subsection{Computational Complexity}
Based on the provided discussions in Section V-B, we first stress that once the weights and biases in \eqref{TrainingMin} are determined, the input-output relationship of the DNN can be calculated as the composition of  affine combinations and activation functions of the neurons as proposed in \eqref{TransferFunction}. 
% and \eqref{BN}.    
%This effectively provides a closed-form expression for the map \eqref{Eq:F1}, within an approximation accuracy that can be made small at will by properly designing and training the {ann}. 
{In other words, the main advantage
%privilege
 of our proposed learning-based method is that one can perform most of the computations required for solving \eqref{opt} offline, and   
 %In other words, while most of the computations can be carried out sporadically, 
 only a few operations  need to be  calculated  when the system configurations vary.} 
 %To elaborate, the complexity of the proposed \gls{ann}-based power allocation can be divided into an online and an offline complexity, as explained next:
%\begin{itemize}
%	\item[(a)] \textbf{Online complexity.} This is the complexity that is required to use the trained \gls{ann} for the online computation of the power allocation vector. As discussed below, this is the complexity that is incurred during the online operation of the method, i.e., when the trained \gls{ann} is being used to output the optimal power control policy following the variations of the channel fading realizations. 
%	\item[(b)] \textbf{Offline complexity.} This is the complexity that is required to build the training set and to implement the training procedure. As discussed below, these tasks can be executed at a much longer time-scale than that with which the channel fading realizations change. 
%\end{itemize}

%\textbf{Online phase.} 
\subsubsection{Online phase}
When the trained DNN is exploited for predicting new outputs, it is required to compute the output  of each neuron in the DNN, moving forward  from the input layer to the output layer. 
 Therefore,  by invoking  \eqref{TransferFunction} and 
 %\eqref{BN},  
 the BN adjustment formula, 
 one can argue that  $\sum_{k=2}^{K+2}u_{k-1}u_{k}+\sum_{k=2}^{K+1}u_k$ real multiplications, together with calculating  $\sum_{k=2}^{K+2}u_{k}$ scalar activation functions $f_{k}$ are incurred during online computation.\footnote{The incurred complexity of additions is neglected compared with the complexity of 
 	%ignored since it is much smaller than that required for
 	 multiplications.}  
 %Despite being typically non-linear, 
 Notably, the activation functions are elementary functions which do not impose any significant computational complexities to the network.  
 Consequently, finding the output of our trained DNN for a given realization vector comprises  negligible complexity as it simply requires the calculation of  forward propagation through the trained DNN.
 % For this reason, the online complexity of the proposed {ann}-based method is much lower than the complexity of the first-order optimal method from {Sec:SEQ}, which instead requires solving convex problems in each iteration.

%\textbf{Offline phase.} 
\subsubsection{Offline phase}
The offline phase, as proposed in Section V-B, includes training set generation and its utilization  to train the network.
%Among these two tasks, the most complex is the generation of the training set, since 
 The training procedure can be carried out efficiently, using off-the-shelf stochastic gradient descent algorithms with  fast convergence \cite{Goodfellow}.
%Moreover gradient computation is performed by the backpropagation algorithm, which further reduces the computational complexity. 
  Generating the training set $\mathcal{T}$ requires  the proposed problem in \eqref{opt} to be solved $|\mathcal{T}|$ times  for different realizations of the system parameters.\footnote{{We use the genetic algorithm \cite{ga} for numerically  solving \eqref{opt} and obtaining an experimental training set $\mathcal{T}$.  This is elaborated in the next section.}}  
  Although it seems to be in contradiction with  the purpose of using DNNS, but this is not the case for the following reasons:
\begin{itemize}
		\item[(i)] The training phase---including the data set generation and DNN adjustment---is performed  \emph{offline}. 
	In this manner, a much higher computational time and complexity can be afforded with significantly less constraints than a real-time computation \cite{Besser_Opt}.  
	% and it is not needed to complete the training process within the channels coherence time.  
	\item[(ii)] The update  procedure of training set is done sporadically. In other words, generating training samples  can be done with a \emph{much greater time-scale} than the real-time configuration settings.  
	%compared with the frequency with which Problem {opt} should be solved if traditional optimization approaches were used.
	
	%\item Despite the first two points, it can be argued that Problem \cref{opt} is NP-complete, and thus generating a large training set appears a daunting task even if it can be performed offline. However, the global optimization method that is proposed in \cref{Sec:BB} eases this issue, making it possible  to globally solve \cref{opt} in practical wireless networks, with a complexity that is affordable for offline implementations. 
\end{itemize}
%It is also important to remark that the proposed method does not need to feed the \gls{ann} with any initialization power vector, as it would be the case if a suboptimal, iterative power allocation algorithm were emulated. This has allowed us to reduce the size of the \gls{ann} input from $L(L+2)$ to $L(L+1)$, further limiting the overall complexity and simplifying the training procedure. 

%******************************************
%******************************************
%Remarkably, our proposed learning-based approach is not limited to the problem of minimizing the QVP. 
% Indeed, it can be generalized to any problem of finding optimal parameters of a wireless system with some slight modifications in the input and output layers.  

\section{Numerical Results and Discussions}
In this section, we present several numerical examples to verify our derived  closed-form expression for the QVP metric.   Moreover, some relevant benchmarks are compared with our proposed scheme to show the efficiency of our hybrid transmission scheme. 
The impact of multiple randomly-located Eves on the performance of our proposed scheme together with some insights on design parameters are also provided. 
The convergence of our ML-based optimization problem is validated by investigating the training performance of our proposed DNN.  
Numerical results show the effectiveness of utilizing DNN for employing optimal transmission parameters.    
The codes are implemented in MATLAB, and 
%the codes
 were run on  a 64-bit 1.80 GHz  Intel(R) Core(TM) i7-8550U CPU. 
%on a 6-core 64-bit 2:5 GHz Intel  E5-2640 microprocessor 
Moreover, the offline phase of data set generation and DNN training were run on Intel(R) Xeon(R) Silver 4114  CPU running at 2.20 GHz.  
In order to accelerate the offline computations, the workflow was scaled up by leveraging a parallel pool of multiple distributed workers 
%using MATLAB Parallel Computing Toolbox 
\cite{parpool}.

For the following experiments,  similar to \cite{sun2020} and \cite{sun-prediction},  we consider the nodes to be located on a normalized two-dimensional region, where $\mathcal{S}$ is placed at the origin without loss of generality.   
{The passive Eves are assumed to be distributed according to homogeneous  PPP $\Phi_E$ with density  $\lambda_E=0.2$} (except for Fig. \ref{fig:Lambda} in which we sweep the value of $\lambda_E$ to see its effect on system performance).  
We set the path loss exponent to $\eta=4$,  and $\frac{BT}{b}=\frac{50}{8}$ during the simulations  \cite{sun2020}.   {For the following figures, the source node is assumed to be equipped with $n_\mathsf{T}=8$ transmit antennas}, except for Fig. \ref{fig:EIP} in which we examine the  effect  of $n_\mathsf{T}$.    
For Figs. \ref{fig:bench}--\ref{fig:Lambda}, we consider the legitimate receiver $\mathcal{D}$ to be located at  $[2,-2]$. 
We also suppose $N_\mathrm{roi}=300$ and $N_\mathrm{bg}=200$ confidential and public packets, respectively, are  scheduled to be delivered with the transmit SNR of $30$ dB, while  the allocation ratio between the information signal and the AN is considered $\zeta=0.5$.  
The fixed rate of sending  confidential packets is also set to $L_s=20$ for the first three simulation figures. 
 We stress that the optimal transmission parameters, including the optimal transmit SNRs together with  the power allocation ratio, and the optimal  confidential transmission rate are evaluated and examined  in the subsequent Figs.   \ref{fig:TrainRmse}--\ref{fig:qvp_rho}.   
 %Moreover, the correlation  coefficient between imperfectly reciprocal channels of  $\boldsymbol{h}_{sd}$  and $\hat{\boldsymbol{h}}_{sd}$ is set to $\rho=0.95$ for Figs. \ref{fig:bench} and \ref{fig:EIP}. 

\begin{figure}
		\centering
	\begin{minipage}{0.45\textwidth}
		\centering
	\includegraphics
	[width=3.5in,height=2.5in,
	trim={0.1in 0.0in 0 0.2in},clip]{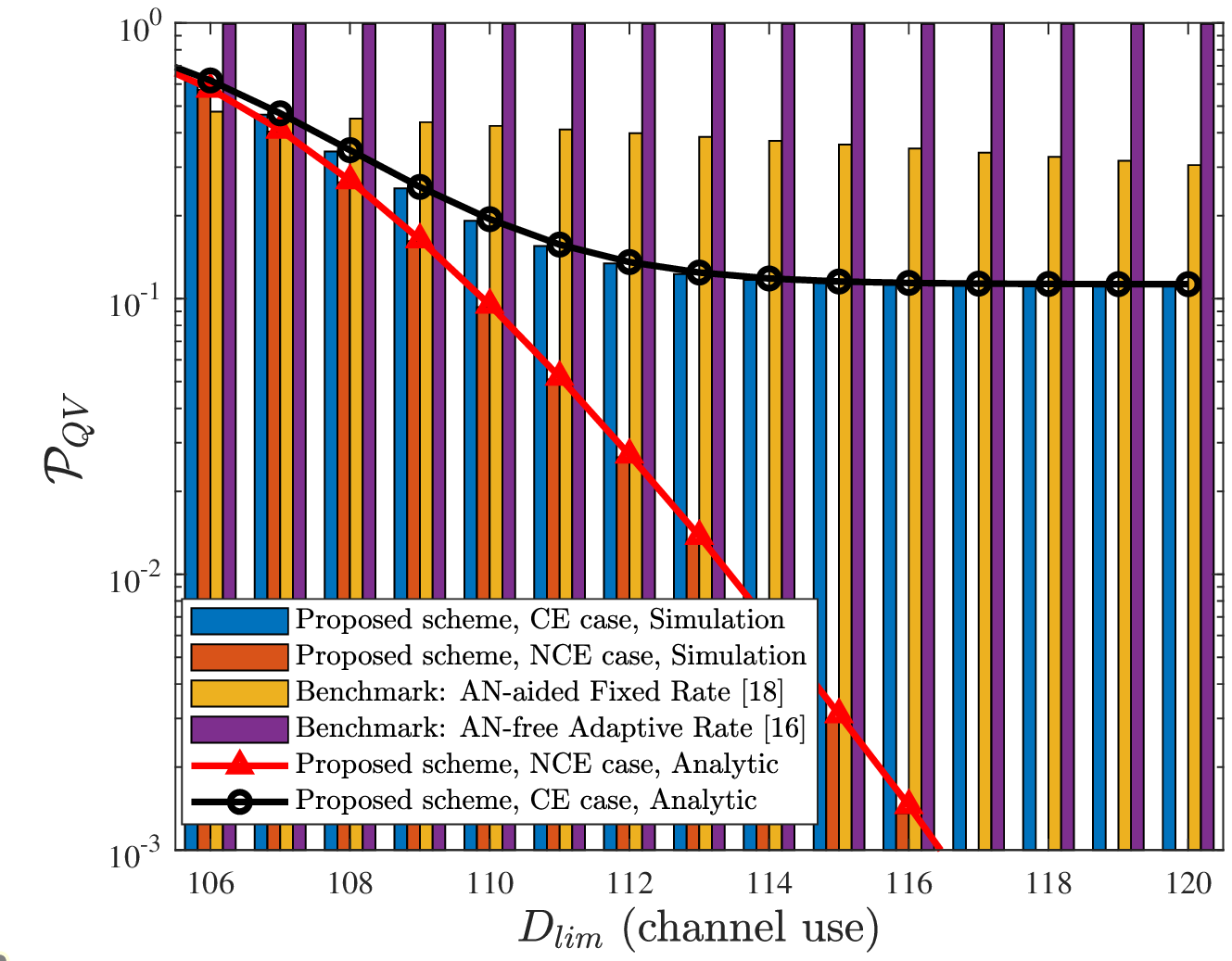}
	\caption{\small QVP versus $D_{lim}$ for our proposed scheme.}
		% compared  with the benchmarks of \cite{sun2020} and \cite{sun-Ind-relay}.}
	\label{fig:bench}
	\vspace{-7mm}
\end{minipage}\hfill
\begin{minipage}{0.45\textwidth}
\centering
		\includegraphics
	[width=3.4in,height=2.4in,
	trim={0.1in 0.0in 0 0.2in},clip]{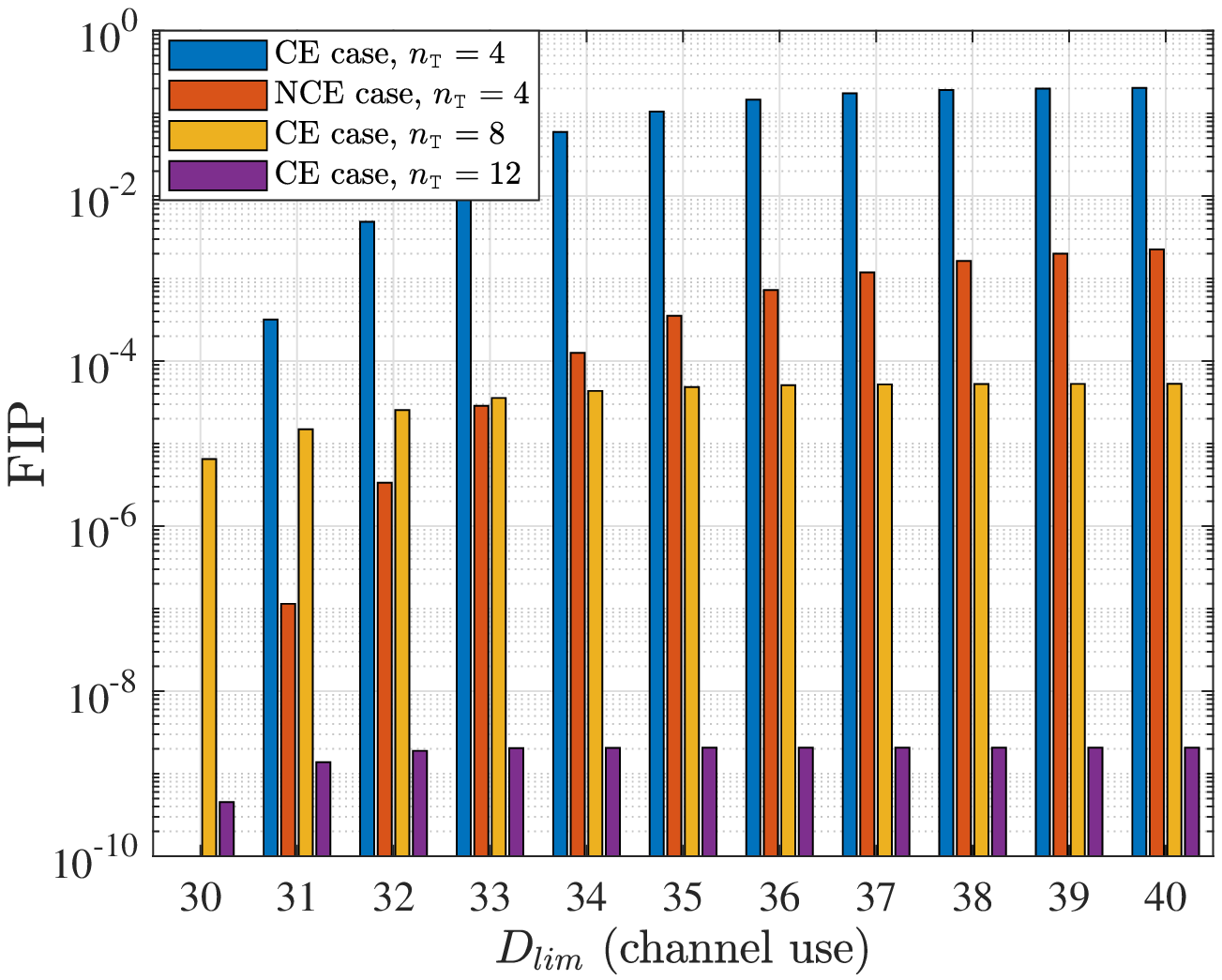}
	\caption{\small The file intercept probability vs. $D_{lim}$ for different values of $n_\mathsf{T}$.}
	\label{fig:EIP}\vspace{-7mm}
\end{minipage}
\end{figure}

Fig. \ref{fig:bench} illustrates the QVP versus the tolerable delay limit $D_{lim}$ for our proposed scheme  and two benchmarks.   
 For this figure,  the decision level $\nu$ is set to $6$, and  
 the channel correlation coefficient is set to $\rho=0.95$ \cite{laplace}. 
  A security level of $\epsilon_\text{IP}=0.1$ is also imposed on the transmission of confidential packets based on \eqref{SOP}. 
According to the figure, one can easily verify  that for all values of $D_{lim}$, the experimental results  match well with the analytical closed-form expressions obtained for the QVP in \eqref{qvp-2}. 
 The numerical results of this figure  are obtained using Monte Carlo simulation over $10^5$ realizations.  
Based on Fig. \ref{fig:bench}, it is clear that  $\mathcal{P}_{QV}$ starts to decrease with the increase in  $D_{lim}$.  
 Generally speaking, by relaxing the delay limit it becomes easier for the legitimate destination node to accumulate  $N=N_\mathrm{roi}+N_\mathrm{bg}$ coded packets within the delay bound.   
%Nevertheless,  the QVP metric gradually becomes saturated due to the intrinsic features of a wireless communication environment comprised of passive Eves.      
More specifically, for smaller values  of $D_{lim}$, the QVP is   reduced with the increase in $D_{lim}$.
 This is because, in this case, the QVP is diretly  dominated by the delay violating probability as discussed in Remark 2.   
However, in further increments of $D_{lim}$,  $\mathcal{P}_{QV}$
is also affected by the presence of  {totally passive} Eves trying to improve their FIP.  Therefore, the QVP might become saturated. 
 %will finally converge to a constant. 
 %This is due to the fact that the system performance is mainly determined by the information intercept probability for the large $T_req$ case.
%The information intercept probability, which reflects the system secrecy performance, cannot be improved by relaxing the delay constraint.
This fact can be explicitly  seen from Fig. \ref{fig:bench}, where for the CE strategy, in which the equivalent wiretap channel is the accumulation of all individual eavesdropping channels, the saturation occurs at moderate values of $D_{lim}$. 
However, for the NCE case, in which the eavesdroppers do nor collude with each other, the monotonic decrease in QVP can be seen in the range of $106<D_{lim}<120$.  
This can be deduced from equations \eqref{TE<k} and \eqref{qvp-2}, together with Lemmas \ref{lemma1} and \ref{lemma2}. 
%, where $\gamma^{\mathsf{CE}}_E$ takes greater values    
Fig. \ref{fig:bench} also demonstrates the outperformance of our hybrid approach for image delivery in comparison with two benchmarks: 1) The adaptive rate transmission with no AN injection (ARNAN) \cite{sun2020}; and, 2) The fixed-rate  (multi-packet) AN-aided  (FRANA)  transmission \cite{sun-Ind-relay}\footnote{We note that the authors in \cite{sun-Ind-relay} actually considered only one packet per TS in their AN injection scheme. However, for the sake of fair comparison and without loss of generality, we consider a fixed number of packets to be transmitted in benchmark 2.}.  
%against multiple Eves  
We can observe from the figure that  the ARNAN case cannot achieve small values of QVP within the given range of delay limits in this figure, which means the  utilization of adaptive multi-packet transmission is not solely sufficient to conquer wiretappers.  
This highlights the importance of AN injection in a wireless transmission environment comprising of multiple passive Eves.    
In contrast, the QVP of FRANA benchmark starts to decrease with a relatively small slope as the delay limit increases. In other words, with the increase in $D_{lim}$, where the existence of multiple Eves starts to affect the system's performance, the injected AN can help achieve smaller values of $\mathcal{P}_{QV}$.  
However, due to the fact that an adaptive transmission is not considered for this benchmark, the packet accumulation  at $\mathcal{D}$ is not properly addressed.  
On the contrary,  our proposed scheme addresses both the packet accumulation at $\mathcal{D}$, via employing \eqref{L}, and the AN injection for the confidential packets leading to a much lower $\mathcal{P}_{QV}$.  

%\begin{figure}
%	\includegraphics
%	[width=3.4in,height=2.4in,
%	trim={0.1in 0.0in 0 0.2in},clip]{eip_Dlim_As.eps}
%	\caption{The file intercept probability vs. $D_{lim}$ for different values of $n_\mathsf{T}$.}
%	\label{fig:EIP}\vspace{-3mm}
%\end{figure} 

To investigate the screcy of our proposed wireless image delivery, Fig. \ref{fig:EIP} demonstrates the FIP metric for different values of $n_\mathsf{T}$. For this figure, we have $\epsilon_\text{IP}=0.1$ and $\rho=0.95$.  Notably, we can see from the figure that  our proposed scheme can achieve arbitrary small values of FIP, indicating the secrecy of our  approach. Moreover, by increasing the number of transmit antennas deployed at $\mathcal{S}$, lower FIPs are achievable, owing to the establishment of  beamforming vectors with pencil-sharp beams which further degrades the wiretap channels according to \eqref{gamma_Ei}.    This figure also validates the discussions provided in Lemma 2 regarding the saturation of FIP by tending $D_{lim}$ to infinity.  
Therefore, one  can control the FIP to a certain limit  by carefully designing the transmission parameters,  
%to provide robust transmissions against Eves, 
even though a considerable time is allowed for Eves to intercept the file.  

Fig. \ref{fig:Lambda} shows  $\mathcal{P}_{QV}$ versus the density of distributed Eves considering the security level $\epsilon_\text{IP} = 0.01$ for confidential packets.  
Different values of channel imperfection are considered in this figure, where better performance can be achieved by having more accurate estimates about the legitimate link.    
We can see from the figure that the population density of Eves can highly affect the performance of the system when  Eves are running in colluding mode. Accordingly, increasing $\lambda_E$ can result in tending the QVP to $1$ in CE case.  
This is because, for the CE scenario, the equivalent wiretap channel cumulatively  depends on all individual eavesdropping channels, i.e., $\gamma^\mathsf{CE}_E\treq\sum_{i\in\Phi_E}^{}  \gamma^\mathsf{s}_{E_i}$.  Thus, increasing $\lambda_E$, which likely increases the expected number of passive Eves, directly leads to  the increase in FIP; thus the QVP.  
On the other hand, increasing $\lambda_E$ does not directly  affect the QVP in NCE case because the equivalent wiretap channel depends only on the strongest eavesdropping link. The slight increase of QVP ‌in the NCE case is due to the fact that increasing $\lambda_E$ will increase the probability of having an Eve with a high-quality link.   {Thus, the saturated  QVP of the NCE case can be viewed as a reference range of QVP achievable for the proposed scheme. }

%\begin{figure}
%	\begin{floatrow}
%		\ffigbox{%
%			\rule{3cm}{3cm}%
%		}{%
%			\caption{A figure}%
%		}
%		\capbtabbox{%
%			\begin{tabular}{cc} \hline
%				Author & Title \\ \hline
%				Knuth & The \TeX book \\
%				Lamport & \LaTeX \\ \hline
%			\end{tabular}
%		}{%
%			\caption{A table}%
%		}
%	\end{floatrow}
%\end{figure}

%\begin{figure}\centering
%%	\begin{minipage}{0.45\textwidth}
%%	\centering
%	\includegraphics
%	[width=3.0in,height=2.2in,
%	trim={0.1in 0.1in 0 0.1in},clip]{}
%	\caption{The QVP  vs. density of distributed Eves $\lambda_E$ for different values of $\rho$}
%	\label{fig:Lambda}\vspace{-3mm} 
%\end{figure}	

%	\begin{minipage}{0.45\textwidth}
%		\centering
%			\begin{table}[] 
%			\small
%			\centering
%			\caption{Parameters for Training the Proposed DNN} \label{Tab1}
%			\vspace{-0mm}
%			\begin{tabular}
%				{|p{2.2in}|p{18mm}|}
%				\hline \textbf{Learning Parameters} & \textbf{Values}\\
%				\hline
%				\hline
%				%\vspace{0mm}
%				Mini-batch size (${m}$) & 50\\
%				Maximum number of training epochs & 500\\
%				Initial learning rate & 0.001\\
%				Learning rate drop factor & 0.9\\
%				Number of training samples ($|\mathcal{T}|$) & 3500\\
%				Number of validation samples & 750\\  
%				Number of neurons in the hidden layers  & (32,16,16,8)\\ 
%				Optimizer & Adam \cite{adam}\\
%				\hline
%			\end{tabular} \vspace{-3mm}	
%		\end{table} 
%	\end{minipage}
%\end{figure}

	\begin{minipage}{\textwidth}
		\begin{minipage}[]{0.49\textwidth}
			\centering
			\includegraphics
		[width=3.0in,height=2.2in,
		trim={0.1in 0.1in 0 0.1in},clip]{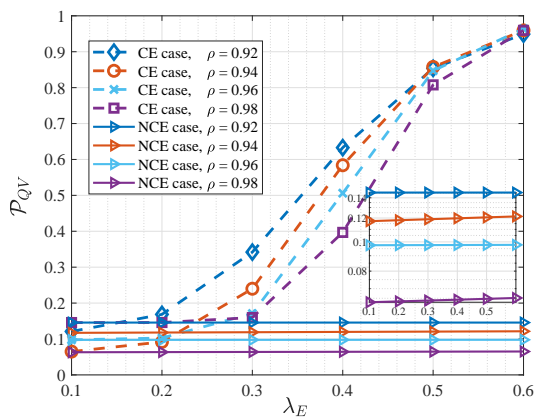}
			\captionof{figure}{\small The QVP  vs. density of distributed Eves $\lambda_E$ for different values of $\rho$}
			\label{fig:Lambda}\vspace{-3mm} 
		\end{minipage}
		\hfill
		\begin{minipage}[]{0.49\textwidth}
			 	\small
			\begin{tabular} 
				{|p{2.2in}|p{16mm}|}
				\hline \textbf{Learning Parameters} & \textbf{Values}\\
				\hline
				\hline
				%\vspace{0mm}
				Mini-batch size (${m}$) & 50\\
				Maximum number of training epochs & 500\\
				Initial learning rate & 0.001\\
				Learning rate drop factor & 0.9\\
				Number of training samples ($|\mathcal{T}|$) & 3500\\
				Number of validation samples & 750\\  
				Number of neurons in the hidden layers  & (32,16,16,8)\\ 
				Optimizer & Adam \cite{adam}\\
				\hline
			\end{tabular} \vspace{-3mm}	
			\captionof{table}{\small Parameters for Training the Proposed DNN}\label{Tab1}
		\end{minipage}
	\end{minipage}

\subsection*{{Performance of the Proposed DNN:}}
In what follows, we investigate the training process of our proposed DNN.  Moreover, the optimized performance of our learning-based scheme is also investigated.  
For the subsequent numerical experiments,  we set $D_{lim}=30$,  $\epsilon_\text{IP}=0.2$, $\gamma_\mathrm{min}=10$ dB, and $\gamma_\mathrm{max}=30$ dB. 
The learning parameters used during the training process are summarized in Table \ref{Tab1}.     
A dataset of size $5000$ is generated using the genetic algorithm \cite{ga} for solving \eqref{opt} 
	%based on the general parameters
	for the general parameters of  $n_\mathsf{T}=8$, $\lambda_E=0.2$,  $D_{lim}=30$. 
	The learning process is facilitated via  normalizing the transmit SNRs
		% before running the training loss minimization.   
	More specifically, we have  considered the  change  of variable $\tilde{P}^\ast_k\sigma_n \gamma_\mathrm{max} = P^\ast_k$, for $k = \{s,p\}$.  
	%we normalize the transmit power to lie in the interval $[0, 1]$, 
	Similar reformulation is done for $L^\ast_s$ by performing $N_\mathrm{roi}\tilde{L}^\ast_s = {L}^\ast_s$ to lie in the interval $[0,1]$.
%% The training is conducted over 500 epochs with batches of
%size 50 and shuffling of the training data before each epoch.
% The optimization problem \eqref{TrainingMin} is solved by the Adam
%optimizer with the squared error as the loss function
%in \eqref{TrainingMin} and initila learnign rate of 0.001 with 0.9 decay factor. 
%Recall that this test set is never used during training and, thus, the ANN has no information about it except for its statistical properties gathered from the training set (and, possibly, the validation set due to hyperparameter tuning).

Fig. \ref{fig:TrainRmse} exhibits the training performance of our proposed DNN by investigating the average training and validation losses over epochs.  
Notably, we used a validation set  during training to verify the generalization performance of our DNN.\footnote
%the performance on a data set the ANN was not trained on  
{Due to the fact that during the training process some  information   about the validation set leaks to the DNN, we also had another set, i.e., the test set, for the final testing of the DNN.  
%It is essential that the test set is never used during training and tuning of the ANN. 
The validation and test sets are 
 generated independently  
%from 200 and 10,000 i.i.d. channel realizations, respectively, 
in the same manner as for the training set.}  
One can easily observe that the minimization problem \eqref{TrainingMin} corresponding to the training process quickly convergences to a small value.  
 %It can be observed that both errors  quickly  approach a small value of the same order of magnitude 
 Interestingly, both losses do not increase over  epochs which means,  the employed training process is not confronted with overfitting/underfitting phenomena.  
 %fits well  with the training data, without any underfit/overfit phenomena. 
%Moreover, neither of the losses increases over time which leads to the conclusion, that the adopted training procedure fits the training data well, without underfitting or overfitting.

%	\begin{table}[] 
%	\small 
%	\centering
%	\caption{Parameters for Training the Proposed DNN} \label{Tab1}
%	\vspace{-0mm}
%	\begin{tabular}
%		{|p{2.2in}|p{18mm}|}
%		\hline \textbf{Learning Parameters} & \textbf{Values}\\
%		\hline
%		\hline
%		%\vspace{0mm}
%		Mini-batch size (${m}$) & 50\\
%		Maximum number of training epochs & 500\\
%		Initial learning rate & 0.001\\
%		Learning rate drop factor & 0.9\\
%		Number of training samples ($|\mathcal{T}|$) & 3500\\
%		Number of validation samples & 750\\  
%		Number of neurons in the hidden layers  & (32,16,16,8)\\ 
%		Optimizer & Adam \cite{adam}\\
%		\hline
%	\end{tabular} \vspace{-3mm}	
%\end{table} 

\begin{figure}
	\begin{minipage}{0.45\textwidth}
		\centering
	\includegraphics
	[width=3.2in,height=2.3in,
	trim={0.1in 0.15in 0 0.15in},clip]{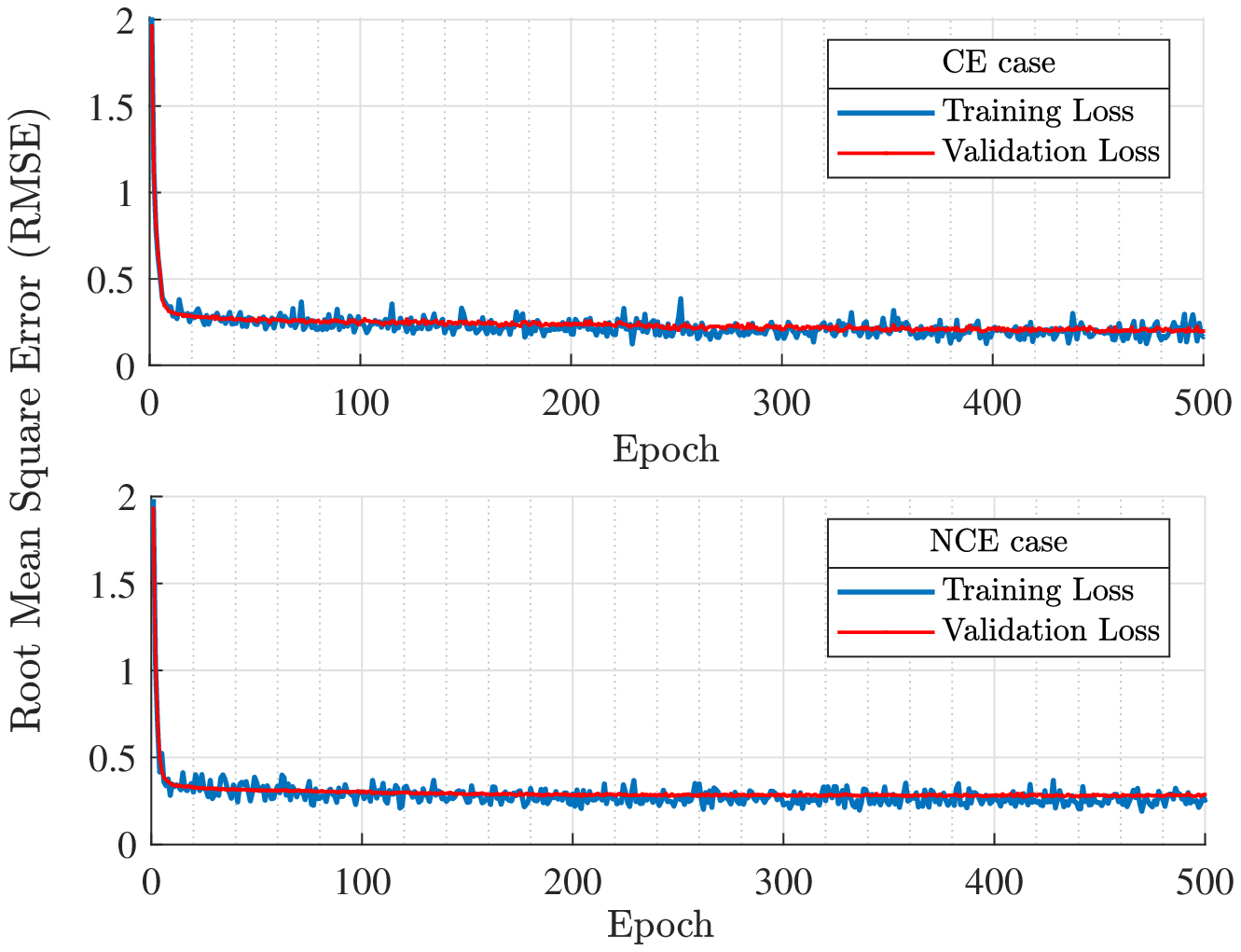}
	\caption{\small Training and validation loss of the proposed DNN.}
	\label{fig:TrainRmse}\vspace{-7mm}
	\end{minipage}\hfill
\begin{minipage}{0.45\textwidth}
\centering
	\includegraphics
	[width=3.2in,height=2.4in,
	trim={0.1in 0.1in 0 0.0in},clip]
	{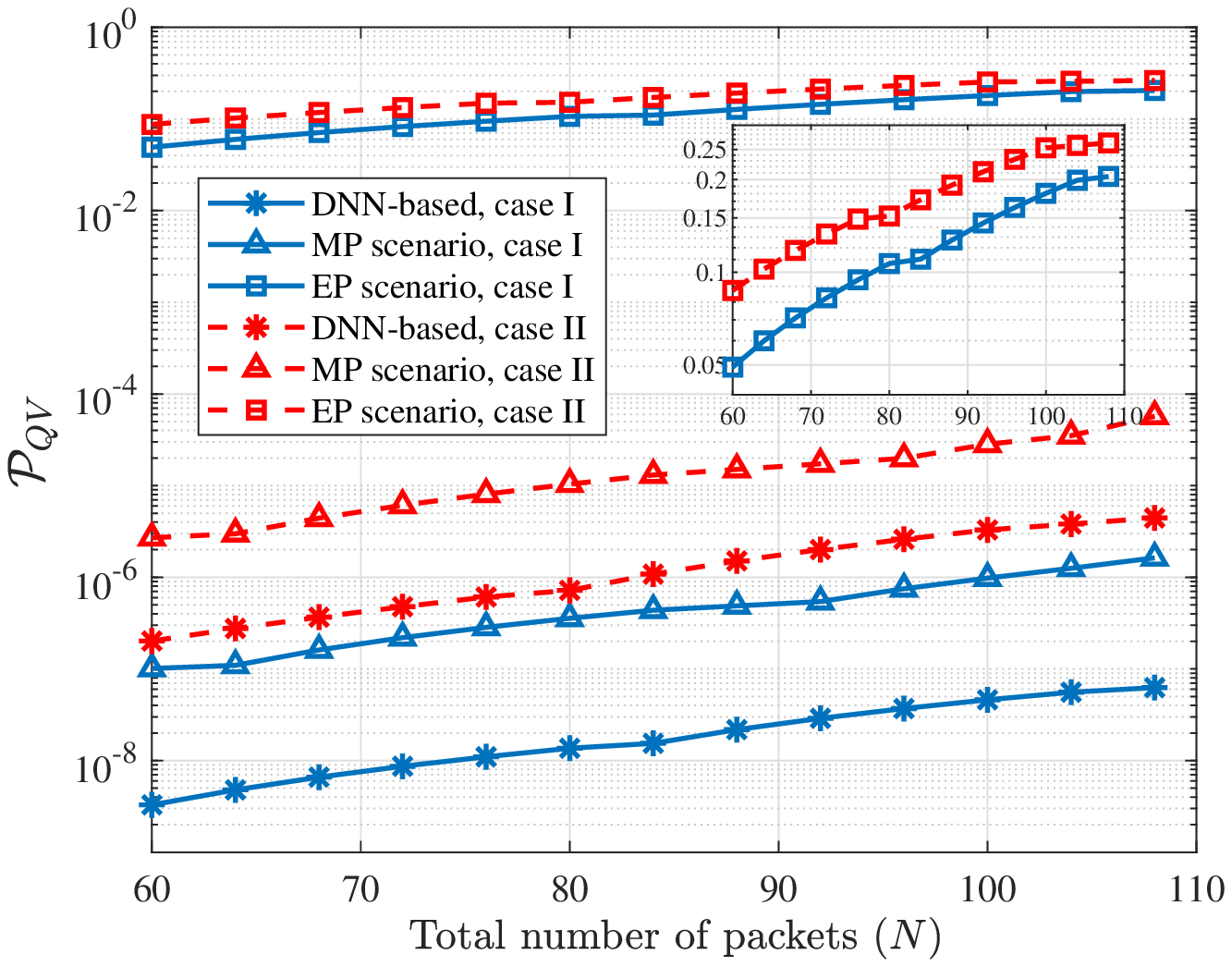}
	\caption{\small QVP vs. the total number of packets for the NCE scenario with $\rho = 0.9$. }
	\label{fig:qvp_N_NCE}\vspace{-7mm}
	\end{minipage}
\end{figure} 
%%caseI:[2,-1], caseII:[1.8,-0.8], epsilon=0.2, rho=??

Figs. \ref{fig:qvp_N_NCE} and \ref{fig:qvp_N_CE} illustrate  the QVP ‌versus the total number of packets, $N=N_\mathrm{roi}+N_\mathrm{bg}$, for  two eavesdropping scenarios, i.e.,  NCE and CE, respectively.   
In these figures,  two different locations are considered for the legitimate destination, which are shown by case I and case II in the figures. In Fig. \ref{fig:qvp_N_NCE}, cases I and II correspond, respectively, to the case of $\mathcal{D}$ being at  $[1.8,-0.8]$ and $[2,-1]$.   
In Fig. \ref{fig:qvp_N_CE}, $\mathcal{D}$ is located at  $[2.4,1.4]$ and $[2.6,1.6]$ for cases I and  II, respectively.   
 In figures \ref{fig:qvp_N_NCE} and \ref{fig:qvp_N_CE}, the resultant QVP  from our proposed DNN is compared with two related baselines.   
1) The maximum power (MP) transmission scenario, in which the transmit SNR of source, for  both types of confidential and public transmission,  is set to its maximum level $\gamma_\mathrm{max}$;  and, 
2) the equal power (EP) allocation between AN and information signal, in which we set $\zeta=0.5$, while the remainder of the transmission parameters are set to their optimum values obtained from our DNN.    
The performance gain of our learning-based approach compared with the mentioned baselines can be clearly observed from the figures. 
To be more specific, our proposed learning-based approach has come up with an intelligent and flexible procedure to take optimal transmission parameters according to different system configurations. This was discussed in details in Section V.  
 On the contrary, adopting MP or EP policies cannot provide the optimum  QVP since they blindly choose  the transmission parameters.  
 For instance, according to the MP scenario, 
 %Figs.  \ref{fig:qvp_N_NCE} and \ref{fig:qvp_N_CE} 
 one can conclude that  sending an image with maximum available power does not necessarily yield to the minimum achievable QVP.  
 Instead,  we should take the image content and the general configurations of the wireless system into account to choose the best option for the transmit power.  
% This is what we studied during this paper.   
As demonstrated by Figs.  \ref{fig:qvp_N_NCE} and \ref{fig:qvp_N_CE}, the QVP of EP scenario is higher than our proposed scheme and the MP scenario. 
This observation highlights the importance of fine-tuning the AN signal based on different configurations \cite{IoTJ}.  
It can be inferred from these figures that if the power allocation ration $\zeta$ between the AN and information signal is not properly adjusted, the resultant QVP tends to large values although maintaining optimal transmit powers. 
The rational behind this effect is 
%due to the fact
that the value one chooses for  $\zeta$ not only determines the amount of AN employed for confusing Eves, but also affects the remaining power budget for transmitting the information signal. 
Finally, 
%based on the depicted cases of I and II, 
 we can see that when the destination moves towards  $\mathcal{S}$, better QVPs can be achieved which is in accordance with one's intuition.

\begin{figure}\centering
	\begin{minipage}{0.45\textwidth}
		\centering
	\includegraphics
	[width=3.2in,height=2.4in,
	trim={0.1in 0.1in 0 0.1in},clip]
	{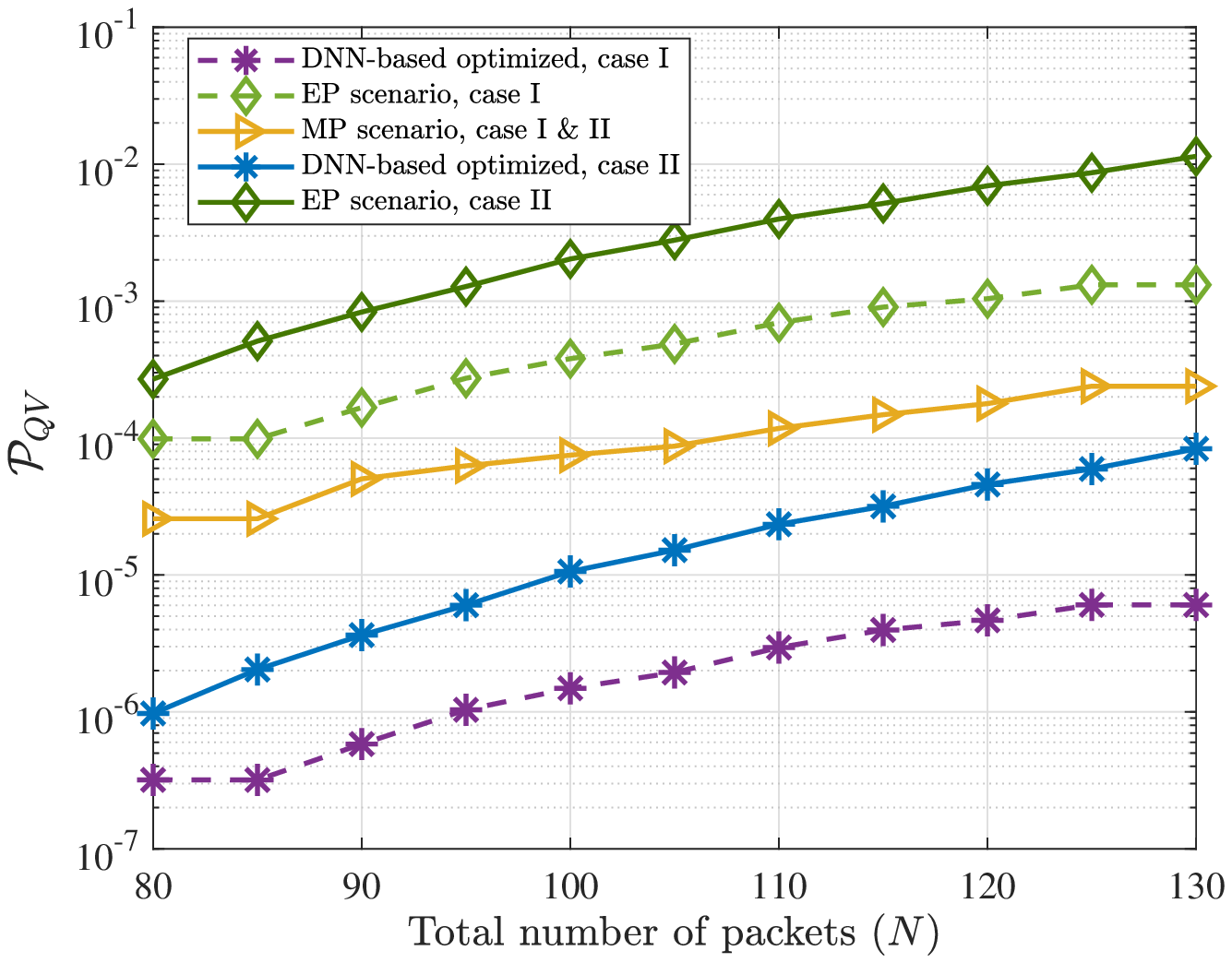}
	\caption{\small QVP vs. the total number of packets for the CE scenario with $\rho = 0.75$.}
	\label{fig:qvp_N_CE}\vspace{-7mm}
	\end{minipage}\hfill
	\begin{minipage}{0.45\textwidth}
	\centering
			\includegraphics
			[width=3.2in,height=2.4in,
			trim={0.1in 0.1in 0 0.1in},clip]
			{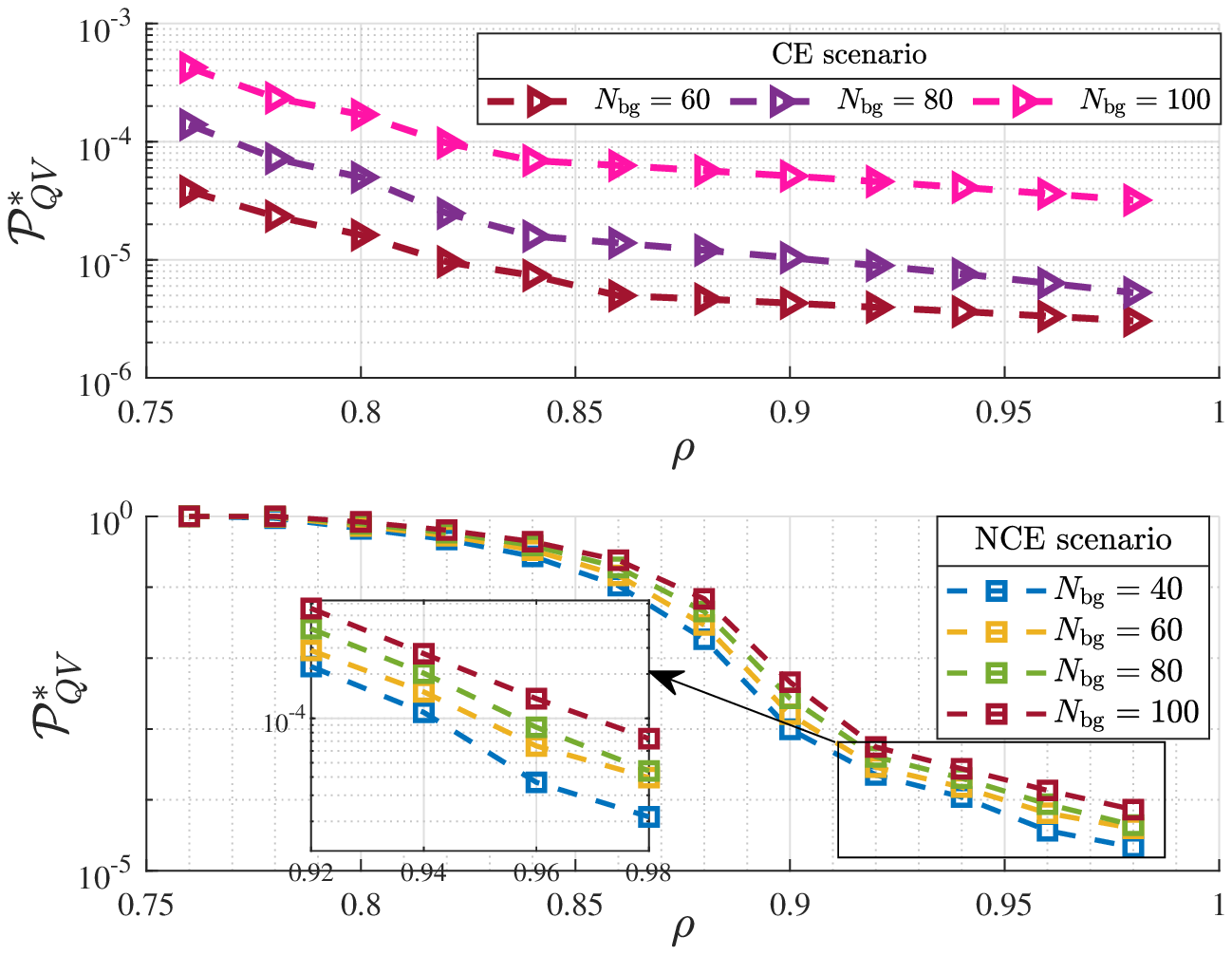}
			\caption{\small Optimized QVP versus the channel correlation coefficient $\rho$.}
			\label{fig:qvp_rho}\vspace{-7mm}
	     %%CE:Nroi=50, NCE:Nroi=24
		%%CE:[1.5,1.5], NCE=[3,3], epsilon=0.2.
	\end{minipage}
\end{figure}
%%caseI:[2.4,1.4], caseII:[2.6,1.6], epsilon=0.2, rho=??

Fig. \ref{fig:qvp_rho} demonstrates our learning-based optimized QVP ($\mathcal{P}^\ast_{QV})$ ‌versus 
 the correlation coefficient between the legitimate channel,   $\boldsymbol{h}_{sd}$,  and the estimated one,  $\hat{\boldsymbol{h}}_{sd}$, for different values of $N_\mathrm{bg}$.  
In this figure, we consider the destination to be located at $[1.5,1.5]$ and $[3,3]$ for the CE and NCE scenarios, respectively.  
 From the figure, one can see the effect of imperfect channel estimation on the overall performance of the system.  
 Accordingly, by having accurate estimations of the legitimate channel,  i.e., higher values of $\rho$, better system performance is achieved in terms of the QVP. 
 Mathematically speaking, the higher the quality of the channel estimation is, the higher received SINR is obtained at $\mathcal{D}$, which can be easily verified from \eqref{gamma_D}.
 %Therefore,  
 The reason is twofold. 
 First, with the increase in $\rho$,  more accurate beamforming vector $\boldsymbol{w}$ for transmitting the information-bearing signal can be designed at PHY based on \eqref{PHY-design}. Hence, larger SINRs are obtained at $\mathcal{D}$;   
 second, when the available estimation of the legitimate link becomes more accurate, the amount of AN signal which unwantedly exists in the legitimate link due to the  imperfect design of the AN beamforming,  $\mathbf{G}\boldsymbol{v}_\mathrm{AN}$, decreases. This can be inferred from \eqref{PHY-design} and \eqref{y_D}.    
 We also emphasize that having greater values for the SINR at destination also results in  higher rates for packet  transmission according to \eqref{L}.    
 %On the other hand,  
 %and 
 With a similar discussion on the derived formulas,  
 having poor estimations of the legitimate channel may result in QVPs which might not be desirably small.
In Fig. \ref{fig:qvp_rho}, we also examine the optimized QVP for different values of $N_\mathrm{bg}$. Accordingly, increasing the number of $N_\mathrm{bg}$ will require more TSs to complete the file delivery which increases $\mathcal{P}^\ast_{QV}$  for a fixed $D_{lim}$. This can also be verified from \eqref{TD}, \eqref{N_tilde_bg}, and \eqref{qvp-2}.  
We note that according to this figure, one can see that a slight improvement in the quality of channel estimation leads to the fact that more packets can be sent within a given delay limit, while having the same QVP. 
For instance, if we set $\mathcal{P}^\ast_{QV} = 10^{-4}$, about $8\%$ increase in $\rho$ results in $25\%$ increase in the number of packets allowed to be sent within the delay limit $D_{lim}=30$ for the CE case. 
Similarly, for the NCE case,  by $3\%$ improvement in  $\rho$,  $1.5$ times more packets can be transmitted.

\section{Conclusions} 
In this paper, a learning-assisted content-aware  image transmission scheme is investigated,  over a  MISO channel 
in the presence of randomly distributed passive eavesdroppers.   
We consider the fact that not all regions of an image have the same priority from the security perspective. 
%Region of interest (RoI) tends to contain important diagnostic information  and need a higher reliability requirement.
Thus, a hybrid multi-packet  transmission scheme is proposed 
%realize both the  error-free data delivery of regions containing  less-important pixels; and an artificial noise (AN)-aided transmission scheme 
to assure security while taking into account the delay limits of a practical image transmission scheme. 
%for the regions containing large amount of information.     
To further reinforce the system's security,  the source packets are encoded by an FC-based method at the application layer. 
%Therefore, the transmission secrecy will be achieved if the legitimate receiver is able to obtain the source  packets before the eavesdroppers.  
 A closed-form expressions for the QVP metric is derived to characterize the delay-aware performance  
 %analysis for
 of our proposed scheme. 
%Transmission parameters of our proposed scheme are then optimized via leveraging the deep neural network (DNN).   
Moreover, we  take advantage of a fully-connected DNN to minimize the QVP by maintaining optimized transmission parameters. 
Simulation results  illustrate that our proposed  learning-assisted scheme outperforms the state-of-the-art benchmarks by achieving  considerable gains in terms of security and  delay.

\appendices
\section{}
\vspace{0mm}
We first define $u = |\boldsymbol{w}^\mathsf{T}\boldsymbol{h}_{se_i}|^2$  and $v = ||\boldsymbol{h}_{se_i}^\mathsf{T}\mathbf{G}||^2$.
Notably, we have  $u \sim \text{Exp}(1)$, and $v \sim \mathcal{G} (n_\mathsf{T}-1,1)$. 
In addition, $u$ and $v$ are independent of each other, since $\boldsymbol{w}$ and $\boldsymbol{G}$ are orthogonal. 
Invoking \eqref{gamma_Ei} and by defining  $\xi\treq\frac{1/\zeta-1}{n_\mathsf{T}-1}$,  we first derive the CDF $F_{\gamma_{E_i}}(\omega)$ of $\gamma_{E_i}$ which can be expressed as
\begin{align}
	F_{\gamma_{E_i}}(\omega)=
	%\nonumber\\
	%&
	\mathsf{Pr}
	\left(u<\frac{\xi \zeta P_s v + r^\eta_i\sigma_n}{\zeta P_s}\omega\right) 
	%\nonumber\\
	%&
	&= 1-\E_v\left[e^{-\xi v \omega} e^{-\frac{r_i^\eta\omega}{\zeta P_s/\sigma_n}}\right] 
	\nonumber \\
	&
	\stackrel{(a)}{=}	1-\exp\left({-\frac{r_i^\eta\omega}{\zeta P_s/\sigma_n}}\right)(1+\xi \omega)^{1-n_\mathsf{T}}, 
\end{align} 
where ($a$) follows from  {\cite[Eq. (3.326.1)] {integ}}. Consequently, 
$F_{\gamma^{NCE}_{E}}(\omega)$  is formulated as follows 
\begin{align}\label{pre-fin}
	F_{\gamma^\mathsf{NCE}_{E}}(\omega)&=\E_{\Phi_E}\left[
	\prod_{i \in \Phi_E} \mathsf{Pr}(\gamma_{E_i}<\omega)
	\right] 
	%	&\stackrel{(b)}{=}\exp\left(
	%	\frac{-\pi\lambda_E}{(1+\xi\omega)^{n_\mathsf{T}-1}}\int_{0}^{\infty}e^
	%	{\frac{-r^{\eta/2}\omega}{\zeta P_s/\sigma_n}}dr
	%	\right), 
	%\nonumber\\
	%&
	\stackrel{(b)}{=}\exp\left(
	\frac{-2\pi\lambda_E}{(1+\xi\omega)^{n_\mathsf{T}-1}}\int_{0}^{\infty} r e^
	{\frac{-r^{\eta}\omega}{\zeta P_s/\sigma_n}}dr
	\right),
\end{align}
where ($b$) holds for the probability generating functional
lemma (PGFL) over PPP \cite{Stoch-Geom}. Finally, calculating \eqref{pre-fin} with formula
{\cite[Eq. (3.326.2)] {integ}} completes the proof.

\section{}
\vspace{0mm}
For the CE case, we first derive the Laplace transform of $\gamma_E$ as follows
\begin{align}
	\mathcal{L}_{\gamma^\mathsf{CE}_E}(s)&=
	%\nonumber\\
	%&
	\E_{\Phi_E,u,v}\left[e^{-s\sum_{i\in \Phi_E}^{}\gamma^\mathsf{s}_{E_i}}\right]\stackrel{(a)}{=}
	\E_{\Phi_E}\left[
	\prod_{i \in \Phi_E} \E_{u,v}\left[e^{-s\gamma^\mathsf{s}_{E_i}}\right]
	\right]
	\nonumber\\
	& 
	\stackrel{(b)}{=}
	\exp\left(\hspace{-1mm}-2\pi\lambda_E
	\underset{\mathcal{I}}{\underbrace{
			\E_{u,v}\Bigg[\int_{0}^{\infty}
			\hspace{-2mm}
			r_i\left(1-\exp({-s\gamma^\mathsf{s}_{E_i}})\right)dr_i \Bigg]}	
	}
	%\int_{0}^{\infty}r_i\left(1-\exp({-s\gamma^\mathsf{s}_{E_i}})\right)dr_i
	\right)\hspace{-1mm},  \label{int_dr}
\end{align}
where ($a$) follows from the fact that $\gamma_{E_i}$'s, $i\in\Phi_E$ are independent of each other over $u$ and $v$. In addition, ($b$)  holds for the PGFL over PPP. 
According to \eqref{gamma_Ei}, we can rewrite $\gamma^\mathsf{s}_{E_i}=\frac{a_1{u}}
{a_2v+r^\eta_i}$, with $a_1=\zeta \frac{P_s}{\sigma_n}$ and $a_2=(1-\zeta)\frac{P_s}{\sigma_n}/(n_\mathsf{T}-1)$. 
Consequently, we proceed to solve the 
expected-valued integral term in \eqref{int_dr}. Hence, we have
\begin{align}
	\mathcal{I}&\stackrel{(c)}{=} 
	\int_{0}^{\infty} r_i  \hspace{1mm} \E_{v}\left[
	\frac{a_1s}{a_1s+a_2v+r^\eta_i}	\right] dr_i 
	%\nonumber \\
	%&
	\stackrel{(d)}{=} B\left(\frac{2}{\eta},1-\frac{2}{\eta}\right) \frac{a_1s}{\eta}	\int_{0}^{\infty} (a_1s+a_2v)^{2/\eta-1} f_v(v)dv
	\nonumber\\
	&\stackrel{(e)}{=}
	\mathcal{B} e^{\frac{sa_1}{2a_2}} (a_1s)^{\frac{n_\mathsf{T}-1+2/\eta}{2}}
	W_{\frac{1-n_\mathsf{T}+2/\eta}{2},\frac{2-n_\mathsf{T}-2/\eta}{2}}\left(\frac{a_1s}{a_2}\right) 
\end{align} 
where  ($c$) follows from the fact that $u$ and $v$ are independent,  ($d$) results from the transformation $r^\eta_i=y$ and using {\cite[Eq. (3.194.3)] {integ}}, and ($e$) follows by changing the variable $a_1s+a_2v=z$ and then using {\cite[Eq. (3.383.4)] {integ}}.  Moreover, $f_v(v)=\frac{v^{n_\mathsf{T}-2}e^{-v}}{\Gamma(n_\mathsf{T}-1)}$ and 
$\mathcal{B}=  \frac{B\left(\frac{2}{\eta},1-\frac{2}{\eta}\right)}{\eta}(a_2)^{\frac{1-n_\mathsf{T}+2/\eta}
	{2}}$.

\section{}
After deriving $\mathcal{L}_{\gamma^\mathsf{CE}_E}(s)$, the CDF of $\gamma^\mathsf{CE}_E$ can be derived via utilizing inverse Laplace transform. However, this method requires considerable computation complexity, hence is almost  intractable. 
Therefore, we utilize the following steps to provide an approximation of $	\overline{F}_{\gamma^\mathsf{CE}_{E}}(\omega)$ by introducing an intermediate RV denoted by $\mathfrak{I}$ which obeys normalized Gamma distribution with parameter $K$. Therefore, we have   
\begin{align}
	\overline{F}_{\gamma^\mathsf{CE}_{E}}(\omega)&=\E_{\Phi_E}
	\left[\mathsf{Pr}\left(	{\gamma^\mathsf{CE}_{E}}>\omega
	\right)\right]
	\stackrel{(a)}{=}
	\E_{\Phi_E}\left[\mathsf{Pr}\left(
	\frac{\gamma^\mathsf{CE}_{E}}{\omega}
	>\mathfrak{I}
	\right)\right]
	%\nonumber\\
	%&
	\stackrel{(b)}{\stackrel{>}{\approx}}
	\E_{\Phi_E}\left[
	\left(
	1-\exp\left(-\varphi	\frac{\gamma^\mathsf{CE}_{E}}{\omega}\right)
	\right)^K
	\right],		
\end{align}
where ($a$) holds due to the fact that a normalized Gamma RV converges to identity when its shape parameter goes
to infinity, and ($b$) follows the CDF bound of a normalized Gamma RV \cite{norm-Gamma}. 
Finally, through using binomial expansion, we obtain the tight approximation given in \eqref{ccdf}, and the proof is completed.

\section{}
To further simplify \eqref{TD}, 
recall that the proposed protocol  terminates image transmission as soon as $N_\mathrm{bg}+N_\mathrm{roi}$ fountain packets 
%\textcolor{gray}{(regardless of presuming decoding overhead)} 
are correctly received at $\mathcal{D}$.  
Additionally, owing to the proposed semi-adaptive multi-packet transmission of public BG  packets based on \eqref{L}, 
which ensures all BG ‌packets to be successfully delivered at $\mathcal{D}$, {public transmission} of BG packets  is not faced with outage event.  
Consequently, if we denote the outage probability for the legitimate link by $\Omega$, we can  
model $N_D^\mathrm{out}$ as an RV following the negative binomial distribution $\mathcal{NB}(
\frac{N_\mathrm{roi}}{L_s}
; \Omega)$, with the  PMF given by
\begin{align}\label{pmf_leg}
	{f}_{N_D^\mathrm{out}}(k)&=\mathsf{Pr}\left(
	N_D^\mathrm{out}=k\right)
	%\nonumber\\
	%&
	=\binom{\frac{N_\mathrm{roi}}{L_{s}}+k-1}{k}\Omega^k (1-\Omega)^{\frac{N_\mathrm{roi}}{L_{s}}},   
	k\geq0, 
\end{align}   
where  $\Omega$ can be derived as 
\begin{align}\label{Omega}
	%\hspace{-2mm}
	%\textcolor{red}{\Omega=\mathsf{Pr}\left(
	%	\log_2(1+\gamma^s_D)<R
	%	\right)}
	%	%\nonumber\\
	%\stackrel{(a)}{=}1-e^{-\kappa^s\theta}\sum_{n=0}^{n_\mathsf{T}-1}\frac{(\kappa^s\theta)^n}{n!}, 
	\Omega&=\mathsf{Pr}\bigg(
	\frac{BT}{b}
	\log_2(1+\gamma^\mathsf{s}_D)<L_s\Big\vert\Psi_1
	\bigg)
	%\nonumber\\
	%&
	\stackrel{(a)}{=}
	\frac{\mathbf{1}_{(\theta\geq\kappa_s\nu)}}	{\mathsf{Pr}(\Psi_1)}
	\Big(\gamma(n_\mathsf{T},\frac{\theta}{\kappa_s})-\gamma(n_\mathsf{T},\nu)\Big)\Big/{\Gamma(n_\mathsf{T})}, 
\end{align} 
with ($a$) following from {\cite[Eq. (3.382.5)] {integ}}.   
%where ($a$) follows from the CDF of $\gamma^s_D$ \cite{Digital-commun}.
Therefore, based on \eqref{TD}, \eqref{pmf_leg}, and \eqref{Omega},    %\eqref{N_tilde_bg},
by substituting $N_D^\mathrm{out} = k-\tilde{N}$ in \eqref{pmf_leg},  \eqref{pmf_TD} is obtained.

\section{}
By assuming the fact that $T_E=\frac{N_\mathrm{roi}}{L_{s}}+N_E^\mathrm{out}$ and similar to the procedure of deriving \eqref{pmf_TD}, one can deduce that $T_E$ has the following PMF, denoted by  $f_{T_E}(k)$. 
\begin{align} 
	\hspace{-2mm}
	f_{T_E}(k)&=
	\binom{k-1}{k-\frac{{N}_\mathrm{roi}}{L_{s}}}\Lambda_{c}^{k-\frac{{N}_\mathrm{roi}}{L_{s}}}(1-\Lambda_c)^{\frac{{N}_\mathrm{roi}}{L_{s}}},
	%\nonumber\\
	k\geq \frac{{N}_\mathrm{roi}}{L_{s}},  
\end{align}
where  $c \in \{\mathsf{NCE},\mathsf{CE}\}$.    
{We remark that for Eves, we have assumed that the accumulation of  $N_\mathrm{roi}$ confidential packets will suffice to intercept the image; hence, $N^{\mathrm{out}}_E$ represents the number of TSs that the public packets have been sent by $\mathcal{S}$, or the confidential packets have not been correctly recovered at wiretappers.} 
Accordingly, for the NCE case we have     
\begin{align}\label{Lambda_NCE}
	\Lambda_\mathsf{NCE} &=  
	{\mathsf{Pr}(\Psi_1)}
	\mathsf{Pr}\bigg\{BT\log_2(1+\gamma^\mathsf{NCE}_E)<R	{\Big\vert\Psi_1}\bigg\}+{\mathsf{Pr}(\Psi_0)}\
	\nonumber\\
	&\stackrel{(a)}{=}
	{\mathsf{Pr}(\Psi_1)} \exp\left(-\beta \lambda_E (\zeta P_s/\theta)^{\frac{2}{\eta}} \bigg(1+\frac{1/\zeta-1}{n_\mathsf{T}-1}\theta\bigg)^{1-n_\mathsf{T}}
	\right)
	%\nonumber \\ 
	%&
	+{\mathsf{Pr}(\Psi_0)},  	
\end{align} 
where ($a$) follows from Lemma \ref{lemma1}. 
%the CDF of $\gamma_E$ \cite{zheng2015}, $\mathsf{Pr}(\Psi)= e^{-\nu}\sum_{n=0}^{n_\mathsf{T}-1}\frac{\nu^n}{n!}$,  and $\beta=\pi\Gamma(1+{2}/{n})$. 
Similarly, for the CE scenario, we can rewrite  
\begin{align}\label{Lambda_CE}
	\Lambda_\mathsf{CE} &=  
	{\mathsf{Pr}(\Psi_1)}\mathsf{Pr}\bigg\{BT\log_2(1+\gamma^\mathsf{CE}_E)<R\Big\vert\Psi_1\bigg\}+{\mathsf{Pr}(\Psi_0)} 
	\nonumber\\
	&
	\stackrel{(b)}{=}
	%1-\Big(\gamma(n_\mathsf{T},\nu)+\Gamma(n_\mathsf{T},\frac{\theta}{\kappa^s})\Big)/{\Gamma(n_\mathsf{T})}, 
	{\mathsf{Pr}(\Psi_1)}
	\left[1-\sum_{k=0}^{K}\binom{K}{k}(-1)^k
	\mathcal{L}_{\gamma^\mathsf{CE}_E}\left(\frac{k\varphi}{\theta}\right)\right]  +{\mathsf{Pr}(\Psi_0)}, 	
\end{align} 
with ($b$) following from Lemma \ref{lemma2}, and $\mathcal{L}_{\gamma^\mathsf{CE}_E}(\cdot)$ derived in \eqref{L(s)}.

%\section*{Acknowledgements}
%The authors would like to thank Prof. Lajos Hanzo for constructive comments and discussions to improve the paper. 

\end{document}